\newtheorem{thm}{Theorem}
\theoremstyle{remark}
\newtheorem{rem}{Remark}
\def\BibTeX{{\rm B\kern-.05em{\sc i\kern-.025em b}\kern-.08em
    T\kern-.1667em\lower.7ex\hbox{E}\kern-.125emX}}
\begin{document}

\title{Large Language Model-Empowered Decision Transformer for UAV-Enabled Data Collection\\
}
\author{Zhixiong~Chen,~\IEEEmembership{Member,~IEEE},
Jiangzhou~Wang,~\IEEEmembership{Fellow,~IEEE}, \\
Hyundong~Shin,~\IEEEmembership{Fellow,~IEEE},
and Arumugam~Nallanathan,~\IEEEmembership{Fellow,~IEEE}
\thanks{Zhixiong Chen and Arumugam Nallanathan are with the School of Electronic Engineering and Computer Science, Queen Mary University of London, London, U.K. (emails: \{zhixiong.chen, a.nallanathan\}@qmul.ac.uk).}
\thanks{Jiangzhou Wang is with the National Mobile Communications Research Laboratory, Southeast University, Nanjing, China (email: j.z.wang@seu.edu.cn).}
\thanks{Hyundong Shin is with the Department of Electronics and Information Convergence Engineering, Kyung Hee University, Yongin-si, Gyeonggido 17104, Republic of Korea (e-mail: hshin@khu.ac.kr).}
}

\maketitle

\begin{abstract}
The deployment of unmanned aerial vehicles (UAVs) for reliable and energy-efficient data collection from spatially distributed devices holds great promise in supporting diverse Internet of Things (IoT) applications. Nevertheless, the limited endurance and communication range of UAVs necessitate intelligent trajectory planning. While reinforcement learning (RL) has been extensively explored for UAV trajectory optimization, its interactive nature entails high costs and risks in real-world environments. Offline RL mitigates these issues but remains susceptible to unstable training and heavily rely on expert-quality datasets. To address these challenges, we formulate a joint UAV trajectory planning and resource allocation problem to maximize energy efficiency of data collection. The resource allocation subproblem is first transformed into an equivalent linear programming formulation and solved optimally with polynomial-time complexity. Then, we propose a large language model (LLM)-empowered critic-regularized decision transformer (DT) framework, termed LLM-CRDT, to learn effective UAV control policies. In LLM-CRDT, we incorporate critic networks to regularize the DT model training, thereby integrating the sequence modeling capabilities of DT with critic-based value guidance to enable learning effective policies from suboptimal datasets. Furthermore, to mitigate the data-hungry nature of transformer models, we employ a pre-trained LLM as the transformer backbone of the DT model and adopt a parameter-efficient fine-tuning strategy, i.e., LoRA, enabling rapid adaptation to UAV control tasks with small-scale dataset and low computational overhead. Extensive simulations demonstrate that LLM-CRDT outperforms benchmark online and offline RL methods, achieving up to 36.7\% higher energy efficiency than the current state-of-the-art DT approaches.
\end{abstract}
\begin{IEEEkeywords}
Data collection,  decision transformer, large language models, unmanned aerial vehicle, offline reinforcement learning.
\end{IEEEkeywords}

\section{Introduction}
The rapid development of the Internet of Things (IoT) has driven the emergence of innovative applications such as smart cities and environmental monitoring, reshaping diverse aspects of daily life and industrial operations.These applications rely on large-scale deployments of spatially distributed IoT devices, such as sensors, to continuously monitor areas of interest and collect valuable data for analysis and decision-making \cite{7496795}. Reliable data collection from these dispersed sensing nodes is therefore critical to unlocking the full potential of IoT systems, particularly in regions where fixed network infrastructure is impractical or prohibitively costly. In this context, unmanned aerial vehicles (UAVs) offer a compelling solution by leveraging their deployment flexibility and precise manoeuvrability to complement ground networks and deliver ubiquitous connectivity for IoT devices \cite{8660516}. Serving as aerial mobile base stations, UAVs can efficiently harvest data from distributed devices and relay it to cloud or edge servers for further processing, making them well-suited for mission-critical applications such as disaster response and precision farming.

Despite the aforementioned advantages, the limited flight endurance and communication range of UAVs pose critical challenges for IoT data collection. Towards this end, extensive research has focused on optimizing UAV trajectory planning and wireless resource management to balance coverage, latency, and energy efficiency throughout the mission. Specifically, in \cite{9444660}, an energy-constrained UAV was dispatched to collect data from a sparse IoT sensor network, where approximation and heuristic algorithms were designed to optimize the UAV's hovering positions and durations for maximizing the collected data volume. In \cite{9957134} and \cite{8432487}, an iterative algorithm and a dynamic programming-based approach were respectively proposed to minimize data collection completion time through jointly optimizing UAV's trajectory and wireless resource allocation. By employing convex optimization techniques for UAV trajectory design, the work in \cite{8779596} minimized the data collection time in a multi-UAV setting. In \cite{10007855}, an ant colony algorithm was developed to minimize the age of information of data collected from ground sensor nodes. Considering the security threats in UAV-aided wireless networks, a short-packet transmission scheme was proposed in \cite{10044099} to ensure both the freshness and security of collected data, wherein a successive convex optimization-based approach was employed to optimize the UAV's trajectory. In \cite{9826413} and \cite{9562293}, convex stochastic programming and fractional programming approaches were respectively investigated to optimize the UAV's trajectory and maximize the energy efficiency of data collection tasks. Although these traditional optimization and heuristic-based offline methods effectively improved data collection performance, they typically require perfect channel state information over the entire time horizon. This hinders their practical deployment, as IoT networks usually operate in highly dynamic wireless environments where communication conditions change unpredictably.

To address the limitations of traditional methods, reinforcement learning (RL) has been widely employed to learn UAV trajectory planning and resource allocation policies through interactions with the environment, eliminating the need for prior knowledge of channel state information and other environmental dynamics \cite{sutton1998reinforcement, 10706837, 9862981}. Specifically, deep Q-network-based approaches were developed in \cite{9701330, 9385412} to learn UAV flight path planning policies, where the flight path was discretized and the UAV only required to select actions from a finite action space. To enable more fine-grained control, the deep deterministic policy gradient (DDPG) algorithm was adopted in \cite{9455139, 9801656} to directly optimize the UAV control policies within a continuous action space, thereby enhancing manoeuvrability. Moreover, the twin delayed deep deterministic policy gradient (TD3) \cite{pmlr-v80-fujimoto18a} builds upon DDPG by incorporating twin critics, delayed policy updates, and target policy smoothing, leading to more accurate value estimation and reduced variance. Accordingly, TD3-based UAV control policies was proposed in \cite{9426899}, achieving a more stable learning process and higher data collection rewards than DDPG. To address the low exploration efficiency caused by the deterministic policies in DDPG and TD3, soft actor-critic (SAC)-based UAV control approaches was proposed in \cite{10086052, 10564117}, employing a stochastic policy to facilitate better exploration in wireless environments with sparse rewards and complex dynamics. Despite their flexibility and effectiveness, the above approaches are all online RL methods that operate in a trial-and-error manner and require extensive real-time interactions with the environment. This makes them less practical for UAV-enabled IoT data collection due to the high operational costs and risks associated with continuous exploration. Moreover, the sample inefficiency of online RL demands a large number of interactions before converging to a high-performing policy, further limiting their applicability in large-scale IoT scenarios.

To tackle the limitations of online RL, offline RL offers a promising alternative by learning effective policies entirely from pre-collected datasets. This paradigm enables the use of historical flight logs, network operation records, and simulated data to train UAV control and wireless network optimization policies, thereby avoiding the risks and costs inherent to real-time trial-and-error \cite{NEURIPS2021_a8166da0, NEURIPS2021_3d3d286a}. However, offline RL algorithms are susceptible to value overestimation, particularly for out-of-distribution actions, and often experience unstable training due to their reliance on bootstrapping to propagate returns. Recently, decision transformer (DT) \cite{NEURIPS2021_7f489f64} was introduced to cast the offline RL problem as a sequence modeling task, effectively converting it into a supervised learning problem. By exploiting the powerful sequence representation and long-range dependency modeling capabilities of transformer architectures, DT enables the direct prediction of actions conditioned on desired returns, thereby improving learning stability and mitigating value overestimation. Inspired by the advantages of DT, the work in \cite{10848143} made an initial attempt by proposing a DT-based approach to jointly optimize UAV positioning and computation resource allocation in a UAV-assisted edge computing network, effectively improving fairness-based throughput.

Despite the impressive performance of DT, it still faces several challenges when applied to UAV-enabled IoT data collection scenarios. Firstly, DT lacks the stitching ability, and its performance is inherently constrained by the quality of the underlying dataset. When the training data is suboptimal, DT struggles to combine actions from suboptimal trajectories into optimal policies, thereby necessitating expert-level datasets to achieve satisfactory performance. Moreover, The transformer architecture is well known to be data-hungry, requiring massive amounts of training data to achieve satisfactory performance. However, collecting such large-scale and high-quality training datasets is often difficult and costly in UAV-enabled wireless networks. To this end, we propose a critic-regularized DT training paradigm that integrates the sequence modeling strengths of DT with the policy improvement guidance of critic networks, enabling the learning of effective UAV control policies from suboptimal datasets and removing the dependence on expert-level training dataset. In addition, inspired by the impressive few-shot generalization ability of pre-trained large language models (LLMs), we employ a pre-trained LLM as the transformer backbone of the DT and adopt a parameter-efficient fine-tuning approach, namely LoRA \cite{hu2022lora}, to rapidly adapt the DT model for UAV control. This design not only reduces data and computational requirements but also facilitates the acquisition of high-quality policies without relying on massive training data.
Our main contributions are summarized as follows:
\begin{itemize}
  \item We formulate a UAV trajectory planning and resource allocation problem to maximize the energy efficiency of UAV-assisted data collection. Firstly, the resource allocation subproblem is converted into an equivalent linear programming, which admits an optimal solution in polynomial time. The UAV trajectory planning task is then reformulated as an offline RL problem, and we provide a theoretical analysis of the limitations of existing DT-based methods in solving it.

  \item To solve the UAV trajectory planning problem and address the limitations of DT methods, we propose an LLM-empowered critic-regularized DT approach, termed LLM-CRDT. Specifically, we employ two critic networks to estimate the state-action values. Unlike standard DT approaches that rely solely on supervised learning over return-conditioned trajectories, our method introduces a state-action value-based regularization term into the training loss. This encourages the DT model to select actions that align with high-value predictions from the critics, thereby improving policy performance beyond that of the behavior trajectories in the offline dataset.

  \item Instead of training LLM-CRDT from scratch, we employ a pre-trained LLM as the transformer backbone of the DT model and fine-tune LLM-CRDT with LoRA, where the pre-trained LLM remains frozen while trainable low-rank matrices are injected into each transformer block. Fine-tuning updates only the input data encoder, LoRA adapters, action decoder, and critic networks, accounting for only a small fraction of the total model parameters. This not only reduces data and computational requirements but also leverages the universal reasoning capabilities of pre-trained LLMs to achieve effective UAV control.

  \item We conduct extensive simulations to evaluate the effectiveness of the proposed LLM-CRDT and resource allocation approach. The results show that our method outperforms benchmark online and offline RL approaches. Compared to the current state-of-the-art DT method, it achieves up to a 36.7\% improvement in energy efficiency for UAV data collection missions. Moreover, the results demonstrate that our approach is capable of learning effective policies from suboptimal datasets.
\end{itemize}

The remainder of this paper is organized as: Section \ref{sec:system_model} introduces the system model. In Section \ref{sec:optimal_RA}, we present the optimal resource allocation algorithm, then reformulates the UAV control problem as an offline RL problem. Section \ref{sec:LLM_CRDT} illustrates the proposed LLM-CRDT to learn effective UAV control policies. Simulation results are presented in Section \ref{sec:simulations}. Finally, Section \ref{sec:conclusion} concludes this work.

\section{System Model}\label{sec:system_model}
As shown in Fig. \ref{fig:sys_model}, this work considers a UAV-enabled IoT network, where a rotary-wing UAV is dispatched to collect data from $N$ distributed IoT devices within a rectangular region of side lengths $X_{\max}$ and $Y_{\max}$, thereby supporting intelligent applications such as environmental monitoring and preventive maintenance. Following prior works, e.g., \cite{9862981, 9701330, 9385412, 9455139, 9801656}, the system runs in discrete time horizon consisting of $T$ slots, each with duration $\delta$. The IoT devices are indexed by $\mathcal{N}=\{ 1, 2, \cdots, N \}$. Each device $n$ ($n\in \mathcal{N}$) is associated with certain amount of data $D_n$ that required to be collected in real-time, and its location information is denoted using three dimensional (3D) Cartesian coordinates, i.e., $\bm{l}_n = (x_n, y_n, 0)$, where $x_n$ and $y_n$ are its coordinates along the $x$- and $y$-axes, respectively.

\subsection{UAV Movement Model}
In this work, the UAV is assumed to fly at a constant altitude $H$ ($H>0$) throughout the entire data collection mission \cite{8779596, 10007855}. It is worth noting that the following proposed approaches can be readily extended to variable-altitude trajectories by introducing an additional control variable for vertical movement.
The UAV's position in time slot $t$ is represented by $\bm{l}_t^{\rm{u}} = (x_t^{\rm{u}}, y_t^{\rm{u}}, H)$, where $0 \le x_t^{\rm{u}} \le X_{\max}$ and $0 \le y_t^{\rm{u}} \le Y_{\max}$.
Each time slot $t$ is divided into two parts, i.e., $\delta = \delta_t^{\rm{fly}} + \delta_t^{\rm{hover}}$, where $\delta_t^{\rm{fly}}$ is the UAV's flight time and $\delta_t^{\rm{hover}}$ is the hovering time for data collection.
At the start of each slot $t$, the UAV determines the flight duration $\delta_t^{\rm{fly}}$ ($0 \le \delta_t^{\rm{fly}} \le \delta$) and flight direction $\theta_t$ ($0 \le \theta_t \le 2\pi$). It then travels to a new position at an average speed $v_t$ ($0 \le v_t \le V_{\max}$), where $V_{\max}$ denotes the maximum allowable flight speed determined by the UAV's hardware limitations.
Thus, the UAV's horizontal location in slot $t$ is calculated as follows:
\begin{align}
\left\{ {\begin{array}{*{20}{c}}
x_t^{\rm{u}} = x_{t-1}^{\rm{u}} + v_t \delta_t^{\rm{fly}} \cos(\theta_t),\\
y_t^{\rm{u}} = y_{t-1}^{\rm{u}} + v_t \delta_t^{\rm{fly}} \sin(\theta_t).
\end{array}} \right.
\end{align}

After reaching the desired position, the UAV hovers there for the remaining time in slot $t$, i.e., $\delta_t^{\rm{hover}}$, to collect data from the IoT devices.
According to the energy consumption model of UAV \cite{10706837, 9701330}, the propulsion power consumption of the UAV with flight speed $v$ is
\begin{align}
P(v) &= P_1\Big(1 + \frac{3 v^2}{U_{\rm{tip}}^2} \Big) + P_2 \Big(\sqrt{1 + \frac{v^4}{4 v_0^4}}  - \frac{v^2}{2 v_0^2} \Big)^{\frac{1}{2}} \notag \\[-0.1cm]
&+ \frac{1}{2} d_0\rho g A v^3,
\end{align}
where the three terms on the right-hand side correspond to the blade profile power, induced power, and parasite power, respectively.
$P_1$ denotes the blade profile power during hovering, and $U_{\text{tip}}$ is the tip speed of the rotor blade. $P_2$ and $v_0$ represent the induced power and the mean rotor induced velocity in hovering, respectively. For the parasite power, $d_0$, $\rho$, $g$, and $A$ denote the fuselage drag ratio, air density, rotor solidity, and rotor disc area, respectively.
By setting $v=0$, the hovering power consumption of the UAV is given by $P_1 + P_2$.
Therefore, the energy consumption of the UAV in time slot $t$ can be expressed as
\begin{align}
E_t = \delta_t^{\rm{fly}}P(v_t) + \delta_t^{\rm{hover}}(P_1 + P_2).
\end{align}

\begin{figure}[t]
\centering
\includegraphics[width=0.37\textwidth]{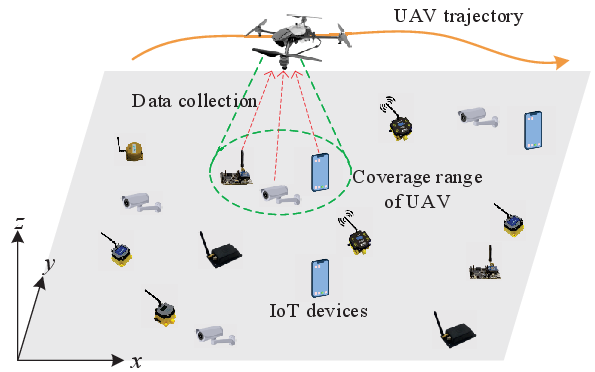}\\
\caption{The considered UAV-enabled IoT networks.}
\label{fig:sys_model}
\end{figure}

\subsection{Channel Model}
This work adopts the widely used probabilistic line-of-sight (LoS) channel model \cite{10007855, 10706837} to characterize the large-scale attenuation of the wireless transmission links between the UAV and IoT devices. The probability of establishing a geometrical LoS connection depends on both environment-specific parameters and the UAV's elevation angle relative to each device.
Specifically, the LoS probability for device $n$ in slot $t$ is given by
\begin{align}
P_{n,t}^{\rm{LoS}} = \frac{1}{1 + a\exp \left(- b(\omega_{n,t} - a) \right)},
\end{align}
where $a$ and $b$ are environment-related parameters, $\omega_{n,t}$ denotes the elevation angle between the UAV and device $n$ in time slot $t$, i.e.,
\begin{align}
\omega_{n,t} = \frac{180}{\pi} \arctan \Big( \frac{H}{\sqrt{(x_t^{\rm{u}} - x_n)^2 + (y_t^{\rm{u}} - y_n)^2} } \Big).
\end{align}
Accordingly, the non-line-of-sight (NLoS) channel probability of device $n$ in slot $t$ is $P_{n,t}^{\rm{NLoS}} = 1 - P_{n,t}^{\rm{LoS}}$.
The expected channel gain of device $n$ in slot $t$ is
\begin{align}
g_{n,t} = \frac{g_0\left(P_{n,t}^{\rm{LoS}} + (1 - P_{n,t}^{\rm{LoS}}) \kappa  \right)}{\left( {H^2 + ( x_t^{\rm{u}} - x_n)^2 + (y_t^{\rm{u}} - y_n)^2} \right)^{\frac{\iota}{2}}},
\end{align}
where $\iota$ is the path loss exponent, $g_0$ is the channel gain at the reference distance $d_0=1$ m, $0<\kappa < 1$ captures the additional attenuation incurred under NLoS conditions.

\subsection{Data Collection Model}
We consider an OFDMA-based network \cite{5281762, 6094142} in which $M$ resource blocks (RBs) or chunks support device-to-UAV data transmission, the set of RB indices is $\mathcal{M}=\{1,2,\cdots,M\}$. In line with prior studies, such as \cite{10367814, 10444714}, we eliminate inter-device interference by restricting the allocation so that each RB is assigned to at most one device, while each device uses only one RB per slot to communicate with the UAV.
Let $z_{n,m}^{(t)} \in \{0, 1\}$ denote the RB allocation decision for device $n$ in time slot $t$, where $z_{n,m}^{(t)} =1$ indicates that RB $m$ is allocated to device $n$, and $z_{n,m}^{(t)} =0$ otherwise. We represent $\bm{Z}_t = \{ z_{n,m}^{(t)}: n \in \mathcal{N}, m \in \mathcal{M}\}$ as the RB assignments in slot $t$. Denoting by $p_n$ the transmit power of device $n$, its achievable transmit rate on RB $m$ in slot $t$ is
\begin{align}
r_{n,m}^{(t)} = W \log_2\Big( 1 + \frac{p_n g_{n,t}}{I_{m,t} + W N_0} \Big),
\end{align}
where $W$ is the per-RB bandwidth, $N_0$ is the noise power spectral density, $I_{m,t}$ represents the co-channel interference from other services sharing the same frequency resources.

Due to the limited transmit power of IoT devices and substantial path loss, the UAV can only collect data from devices located within its coverage area during the hovering duration of each time slot $t$.
Let $\mathcal{N}_t^{\rm{c}}$ represent the set of devices covered by the UAV in time slot $t$, which is defined as
\begin{align}
\mathcal{N}_t^c = \left\{ n \left| d_t^n \le R_{\max}, n \in \mathcal{N} \right. \right\},
\end{align}
where $d_t^n = \sqrt{( x_n - x_t^u)^2 + (y_n - y_t^u)^2}$ is the horizontal distance between device $n$ and the UAV in slot $t$.
$R_{\max}$ is the maximal horizontal coverage range of the UAV decided by the maximal azimuth angle $\omega_{\max}$. Thus, $R_{\max}$ is given by
\begin{align}
R_{\max} = H\tan \left( \omega_{\max} \right).
\end{align}

Restricted by limited wireless resources, the UAV may not collect data from all devices within its coverage area during each time slot $t$. Thus, it is crucial to judiciously select the devices for data collection in each slot.
Let $\beta_{n,t} \in \{0,1\}$ denote the selection flag for device $n$ in slot $t$, where $\beta_{n,t}=1$ if device $n$ is chosen by the UAV for data collection, and $\beta_{n,t}=0$ otherwise.
Clearly, $\beta_{n,t}=0$ if $n \notin \mathcal{N}_t^{\rm{c}}$. Let $\bm{\beta}_t=\{\beta_{n,t}: n \in \mathcal{N}\}$ represent the device selection decision in slot $t$.
For each device $n$, the amount of data collected during slot $t$ is expressed as:
\begin{align}
\Delta_{n, t}^{\rm{collect}} = \min \Big\{ \beta_{n,t} \sum \nolimits_{m=1}^M z_{n,m}^{(t)} r_{n,m}^{(t)} \delta_t^{\rm{hover}}, ~D_{n,t} \Big\},
\end{align}
where $D_{n,t}$ denotes the remaining data on device $n$ at the beginning of slot $t$, it evolves as:
\begin{align}
D_{n,t+1} = \max \left\{ D_{n,t} - \Delta_{n,t}^{\rm{collect}}, ~0 \right\},
\end{align}
where $D_{n,0}=D_n$.
Accordingly, the energy efficiency of the UAV in slot $t$ can be defined as
\begin{align}
\phi_t = \frac{\sum\nolimits_{n = 1}^N \Delta_{n,t}^{\rm{collect}}}{E_t},
\end{align}
which quantifies data successfully gathered per unit energy in slot $t$. A higher value of $\phi_t$ indicates more energy-efficient data collection.

\subsection{Problem Formulation}
This work focuses on maximizing the UAV's energy efficiency during the data collection mission by co-designing the UAV control and resource allocation policies. The optimization problem is formulated as follows:
\begin{align}
&\max_{\left\{ \theta_t, v_t, \delta_t^{\rm{fly}}, \bm{\beta}_t, \bm{Z}_t \right\}_{t = 1}^{T}}  ~ \sum\limits_{t = 1}^T \frac{\sum\nolimits_{n = 1}^N \Delta_{n,t}^{\rm{collect}}}{E_t} \label{prob:P}\\
\text{s.~t.~~} & 0 \le \theta_t \le 2\pi, \forall 1 \le t \le T, \label{cons:P_1}\tag{\theequation a}\\
& 0 \le v_t \le V_{\max}, \forall 1 \le t \le T, \label{cons:P_2}\tag{\theequation b} \\
& 0 \le  \delta_t^{\rm{fly}} \le \delta, \forall 1 \le t \le T, \label{cons:P_3}\tag{\theequation c} \\
& 0 \le x_t^{\rm{u}} \le X_{\max}, \forall 1 \le t \le T, \label{cons:P_4}\tag{\theequation d} \\
& 0 \le y_t^{\rm{u}} \le Y_{\max}, \forall 1 \le t \le T, \label{cons:P_5}\tag{\theequation e} \\
& 0 \le \beta_{n,t} \le \mathbbm{1}(n \in \mathcal{N}_t^{\rm{c}}),  \forall n \in \mathcal{N}, 1 \le t \le T, \label{cons:P_6}\tag{\theequation f} \\
& \beta_{n,t} \in \{0, 1\}, \forall n \in \mathcal{N}, 1 \le t \le T, \label{cons:P_7}\tag{\theequation g} \\
& \sum\nolimits_{n = 1}^N z_{n,m}^{(t)} \le 1, \forall m \in \mathcal{M}, 1 \le t \le T, \label{cons:P_8}\tag{\theequation h} \\
& \sum\nolimits_{m = 1}^M z_{n,m}^{(t)} \le 1, \forall n \in \mathcal{N}, 1 \le t \le T, \label{cons:P_9}\tag{\theequation i} \\
& z_{n,m}^{(t)} \in \{0, 1\}, \forall n \in \mathcal{N}, m \in \mathcal{M}, 1 \le t \le T, \label{cons:P_10}\tag{\theequation j}
\end{align}
wherer constraints \eqref{cons:P_1}-\eqref{cons:P_3} correspond to the restrictions of UAV control policy, ensuring that the UAV operates within its allowable limits.
\eqref{cons:P_4} and \eqref{cons:P_5} enforce the UAV's position to remain within the predefined operational region.
\eqref{cons:P_6} and \eqref{cons:P_7} ensure that only devices within the UAV's coverage area can be selected for data collection in each slot. Here, $\mathbbm{1}(\cdot)$ denotes the indicator function, which returns 1 if and only if the condition inside the parentheses is satisfied, and 0 otherwise.
\eqref{cons:P_8}, \eqref{cons:P_9} and \eqref{cons:P_10} are the resource allocation restrictions.

Directly solving problem \eqref{prob:P} requires complete channel state information between all devices and the UAV across all time slots and for all possible UAV positions. However, this is impractical due to the continuous nature of the UAV's position space, which contains infinite possible locations. Moreover, accurate channel state information for future time slots cannot be obtained in advance.
To address these challenges, we propose efficient approaches to solve problem \eqref{prob:P} in the following sections, without relying on future channel state information of the device-UAV links.

\section{Optimal Resource Allocation and Problem Transformation}\label{sec:optimal_RA}
This section proposes an effective approach to derive the optimal device selection and resource allocation policy, and reformulate problem \eqref{prob:P} as a Markov decision process (MDP) for UAV control. We then highlight the limitations of existing RL and DT algorithms, which motivate the development of the proposed LLM-CRDT to effectively learn UAV control policies in offline settings, as detailed in Section \ref{sec:LLM_CRDT}.

\subsection{Optimal Device Selection and Resource Allocation}\label{subsec:resource_allocation}
According to the formulation of problem \eqref{prob:P}, for any given UAV control policy in arbitrary time slot $t$ ($1 \leq t \leq T$), the device selection and resource allocation decisions in slot $t$ affect only the energy efficiency of this slot and are independent of other slots. Thus, the device selection and resource allocation problem can be decoupled across time slots and solved independently. Accordingly, we focus on deriving the optimal device selection and resource allocation strategy for a specific time slot $t$ under an arbitrary UAV control policy. Notably, the proposed algorithm can be directly applied to any time slot.
Since $\beta_{n,t} = \sum_{m=1}^M z_{n,m}^{(t)}$, once RBs are assigned the device selection is implied.
Accordingly, we decouple the slot-$t$ resource allocation subproblem from \eqref{prob:P} as:
\begin{align}
\max_{\bm{Z}_t} & ~ \sum\nolimits_{n = 1}^N \min \Big\{ \sum \nolimits_{m=1}^M z_{n,m}^{(t)} r_{n,m}^{(t)}  \delta_t^{\rm{hover}}, ~D_{n,t} \Big\} \label{prob:P_device}\\[-0.1cm]
\text{s.~t.~~} & \eqref{cons:P_8}, \eqref{cons:P_9}, \eqref{cons:P_10}. \notag
\end{align}

Directly solving \eqref{prob:P_device} is challenging due to its nonlinear integer structure. We instead cast it as a maximum-weight perfect matching, which can be solved in polynomial time. We therefore convert it to a maximum-weight perfect matching problem, which admits a polynomial-time solution.
Specifically, we define a bipartite graph $\bm{G}=(\mathcal{V}, \mathcal{E})$, where the vertex set is defined as $\mathcal{V}=\mathcal{N} \cup \bar{\mathcal{M}}$, and $\mathcal{E}$ represents the set of edges connecting vertices in $\mathcal{N}$ and $\bar{\mathcal{M}}$. In the graph $\bm{G}$, each vertex $n \in \mathcal{N}$ corresponds to device $n$, while $\bar{\mathcal{M}} = \mathcal{M} \cup \mathcal{M}_v$ is an extended version of $\mathcal{M}$, where each vertex $m \in \mathcal{M}$ corresponds to RB $m$. The set $\mathcal{M}_v$ contains virtual vertices introduced to ensure $\left|\bar{\mathcal{M}} \right| = \left|\mathcal{N} \right|$, thereby forming a balanced bipartite graph. The weight of each edge $(n, m) \in \mathcal{E}$ is
\begin{align}\label{eq:edge_weight}
\!\!\!\Omega_{n,m}^{(t)} \!=\! \left\{ {\begin{array}{*{20}{c}}
\!\!\!\min \!\big\{r_{n,m}^{(t)}  \tau_t^{\rm{hover}}, D_{n,t} \big\}, \!\!&\text{if}~ n \in\! \mathcal{N}_t^{\rm{c}}, m \in\! \mathcal{M},\\
0,& \text{otherwise}.
\end{array}} \right.
\end{align}
Based on graph $\bm{G}$, we reformulate \eqref{prob:P_device} as a maximum weight perfect bipartite matching problem, where the objective is to seek a perfect matching $\bm{R}$ maximizing $\sum_{(n, m) \in \bm{R}} \Omega_{n,m}^{(t)}$.
We introduce the binary connector $c_{n,m} \in \{0, 1\}$ for edge $(n, m)$: $c_{n,m} =1$ represents assigning RB $m$ to device $n$, $c_{n,m} =0$ otherwise. Hence, the bipartite matching problem is formulated as follows:
\begin{align}
\max_{\{c_{n,m}: n \in \mathcal{N}, m \in \bar{\mathcal{M}} \}} & ~ \sum\nolimits_{n = 1}^N \sum\nolimits_{m = 1}^{\left| \bar{\mathcal{M}} \right|} c_{n,m} \Omega_{n,m}^{(t)} \label{prob:P_device_matching}\\
\text{s.~t.~~} & \sum\nolimits_{n = 1}^N c_{n,m} = 1, \forall m \in \bar{\mathcal{M}}, \label{cons:P_dm1}\tag{\theequation a}\\
& \sum\nolimits_{m = 1}^{\left| \bar{\mathcal{M}} \right|} c_{n,m} = 1, \forall n \in \mathcal{N}, \label{cons:P_dm2}\tag{\theequation b}\\
& c_{n,m} \in \left\{0,1 \right\}, \forall m \in \bar{\mathcal{M}}, n \in \mathcal{N}. \label{cons:P_dm3}\tag{\theequation c}
\end{align}
However, finding the optimal solution to problem \eqref{prob:P_device_matching} remains non-trivial, as it is an integer programming problem. By relaxing the integer constraint \eqref{cons:P_dm3}, problem \eqref{prob:P_device_matching} can be transformed into the following linear programming:
\begin{align}
\max_{\{c_{n,m}: n \in \mathcal{N}, m \in \bar{\mathcal{M}}\}_{n=1}^{N}} & ~ \sum\nolimits_{n = 1}^N \sum\nolimits_{m = 1}^{\left| \bar{\mathcal{M}} \right|} c_{n,m} \Omega_{n,m}^{(t)} \label{prob:P_device_matching1}\\
\text{s.~t.~~} & \eqref{cons:P_dm1}, \eqref{cons:P_dm2}, \notag\\
& 0 \le c_{n,m} \le 1, \forall m \in \bar{\mathcal{M}}, n \in \mathcal{N}. \label{cons:P_dm3_1}\tag{\theequation a}
\end{align}
\begin{rem}\label{rem:one}
In \eqref{prob:P_device_matching1}, each row associated with \eqref{cons:P_dm1} and \eqref{cons:P_dm2} has exactly one coefficient equal to \(1\), implying total unimodularity of the coefficient matrix (since every square submatrix has determinant \(0\), \(1\), or \(-1\)).
According to \cite{schrijver2003combinatorial}, \eqref{prob:P_device_matching1} attains an integral optimum that is also optimal for the original problem \eqref{prob:P_device_matching}.
\end{rem}

Remark \ref{rem:one} guarantees that solving \eqref{prob:P_device_matching1} directly recovers the optimum of \eqref{prob:P_device_matching}. This work employs the current matrix multiplication time algorithm \cite{10.1145/3424305} to find the optimal solution of problem \eqref{prob:P_device_matching1}, which has a time complexity of $\mathcal{O}((N^{2+1/6})^2)$ as problem \eqref{prob:P_device_matching1} involves $N^2$ variables.

Based on the above analysis, problem \eqref{prob:P_device} can be efficiently solved to obtain the optimal device selection and resource allocation decisions. For clarity, we present the detailed solution procedure in Algorithm \ref{alg:RB_allocation}.
Firstly, we transform problem \eqref{prob:P_device} into its equivalent bipartite graph matching form, i.e., \eqref{prob:P_device_matching}, which involves calculating the edge weights and verifying the feasibility of assigning each RB to each device. This incurs a time complexity of $\mathcal{O}(NM)$.
Hence, we convert \eqref{prob:P_device_matching} into its equivalent problem, i.e., \eqref{prob:P_device_matching1}, and find the optimal resource allocation decision $\bm{Z}_t^*$. The optimal device selection policy is then computed as $\beta_{n,t}^*=\sum_{m=1}^M z_{n,m}^{(t), *}$.
Consequently, the overall time complexity of Algorithm \ref{alg:RB_allocation} is $\mathcal{O}(NM + (N^{2+1/6})^2)$.

\begin{algorithm}[!t] \small
\caption{Optimal Device Selection and RB Allocation}
\label{alg:RB_allocation}
\begin{spacing}{0.85}
\begin{algorithmic}[1]
\State Formulate the RB allocation problem \eqref{prob:P_device}
\State Construct a bipartite graph $\bm{G}=(\mathcal{V}, \mathcal{E})$ and compute the weights of each edge in $\mathcal{E}$ based on \eqref{eq:edge_weight}.
\State Construct the linear programming problem of the bipartite graph matching problem, i.e., \eqref{prob:P_device_matching1}
\State Solve problem \eqref{prob:P_device_matching1} and obtain the optimal bipartite perfect matching $\{\bm{c}_n\}_{n=1}^N$
\State Calculate the optimal RB allocation decision as $\bm{Z}_t^*=\{z_{n,m}^{(t), *}=c_{n,m}: n \in \mathcal{N}, m \in \mathcal{M}\}$.
\State Calculate the optimal device selection decision as $\bm{\alpha}_t^*=\{\alpha_{n,t}^*=\sum_{m=1}^M z_{n, m}^{(t), *}: \forall n \in \mathcal{N}\}$
\State Return the optimal device selection decision $\bm{\beta}_t^*$ and the optimal RB allocation decision $\bm{Z}_t^*$
\end{algorithmic}
\end{spacing}
\end{algorithm}

\subsection{Problem Transformation and Analysis}\label{subsec:prob_trans}
Based on the above analysis, the optimal device selection $\bm{\beta}_t^*=\{\beta_{n,t}^*: n \in \mathcal{N}\}$ and optimal resource allocation policy $\bm{Z}_t^*=\{z_{n,m}^{(t), *}: n \in \mathcal{N}, m \in \mathcal{M}\}$ under any given UAV control policy can now be obtained using Algorithm \ref{alg:RB_allocation}. By substituting $\bm{\beta}_t^*$ and $\bm{Z}_t^*$ into the original problem \eqref{prob:P}, we reduce it to the following UAV control problem:
\begin{align}
\!\!\!\!\max_{\left\{ \theta_t, v_t, \delta_t^{\rm{fly}}\right\}_{t = 1}^{T}} & \!\!\sum\limits_{t = 1}^T \!\frac{\sum\limits_{n = 1}^N \!\min \!\Big\{\!\sum \limits_{m=1}^M z_{n,m}^{(t),*} r_{n,m}^{(t)}  \delta_t^{\rm{hover}}, D_{n,t} \Big\} }{E_t} \label{prob:P_UAV}\\
\text{s.~t.~~} & \eqref{cons:P_1}, \eqref{cons:P_2}, \eqref{cons:P_3}, \eqref{cons:P_4}, \eqref{cons:P_5}. \notag
\end{align}

Problem \eqref{prob:P_UAV} is intrinsically a sequential decision-making problem, which is impractical to solve directly as it requires precise channel state information for all device-UAV links over the entire time horizon and across all possible UAV positions. Moreover, even with full channel state information, the long-term and non-convex nature of \eqref{prob:P_UAV} would still lead to significant computational challenges.
To tackle these issues, we reformulate problem \eqref{prob:P_UAV} as a MDP and employ RL algorithms to learn a satisfactory UAV control policy.
Specifically, we define the MDP as a tuple $\left(\mathcal{S}, \mathcal{A}, \mathcal{P}, \mathcal{R}, \gamma \right)$: $\mathcal{S}$ is the state space, $\mathcal{A}$ is the action space, $\mathcal{P}: \mathcal{S}\times \mathcal{A} \rightarrow \mathcal{S}$ is the transition model, $\mathcal{R}: \mathcal{S}\times \mathcal{A} \rightarrow \mathbb{R}$ is the reward function, $\gamma \in [0, 1)$ is the discount factor. In each time slot $t$, one can choose an action $\bm{a}_t \in \mathcal{A}$ based on the current state $\bm{s}_t \in \mathcal{S}$. This action is then applied to the environment, resulting in a transition to $\bm{s}_{t+1} \in \mathcal{S}$ and returning $r_t = \mathcal{R}(s_t, a_t)$.
Specifically, the state, action, and reward function are designed as follows:
\begin{itemize}
  \item \emph{State}: In each time slot $t$, the state $\bm{s}_t$ consists of the horizontal position of the UAV, the remaining data amount of devices, and the coverage indicator of devices, i.e.,
  \begin{align}
  \bm{s}_t \!=\! (x_t^{\rm{u}}, y_t^{\rm{u}}, \!D_{1,t}, \cdots, D_{N,t}, \!\mathbbm{1}(1 \in \mathcal{N}_t^{\rm{c}}), \!\cdots\!, \!\mathbbm{1}(N \in \mathcal{N}_t^{\rm{c}})).
  \end{align}

  \item \emph{Action}: The UAV control action in each time slot $t$, i.e., $\bm{a}_t \in \mathcal{A}$, comprises the flight direction, speed, and flight time, defined as
  \begin{align}
  \bm{a}_t = (\theta_t, v_t, \delta_t^{\rm{fly}}).
  \end{align}
  Note that $\bm{a}_t$ must satisfy the constraints \eqref{cons:P_1}-\eqref{cons:P_5} in problem \eqref{prob:P_UAV}.

  \item \emph{Reward}: According to the objective of problem \eqref{prob:P_UAV}, we set the per-slot reward to the energy efficiency, i.e.,
  \begin{align}
   \!r_t = \frac{\sum\limits_{n = 1}^N \min \Big\{\sum \limits_{m=1}^M z_{n,m}^{(t),*} r_{n,m}^{(t)} \delta_t^{\rm{hover}}, D_{n,t} \Big\} }{E_t}.
  \end{align}
\end{itemize}

Based on the above defined MDP, various online RL approaches, such as SAC \cite{pmlr-v80-haarnoja18b} and TD3 \cite{pmlr-v80-fujimoto18a}, can be utilized to learn effective UAV control policies.
These algorithms continuously interact with the environment and update their policy to maximize the long-term cumulative reward, i.e., $\sum_{t=1}^T \gamma^t r_t$.
However, their interactive nature limits practical deployment, particularly for UAV control tasks where real-time exploration is often infeasible or costly. In addition, online RL requires persistent access to the environment for hyperparameter tuning and policy refinement, which is both resource- and time-intensive.
Although offline RL methods, such as temporal-difference (TD) learning \cite{sutton1998reinforcement}, can learn effective policies entirely from pre-collected dataset without interacting with the environment, they often suffer from unstable training due to their reliance on bootstrapping to propagate returns. Furthermore, the requirement to discount future rewards can induce undesirable short-sighted behaviors, limiting the agent's ability to plan over long horizons.
To overcome these issues, the DT framework \cite{NEURIPS2021_7f489f64} treats the offline RL problem as a generic conditional sequence modeling task, effectively converting it into a supervised learning problem. This enables DT to handle long sequences and avoid stability issues associated with boostrapping.

\begin{figure}[t]
\centering
\includegraphics[width=0.45\textwidth]{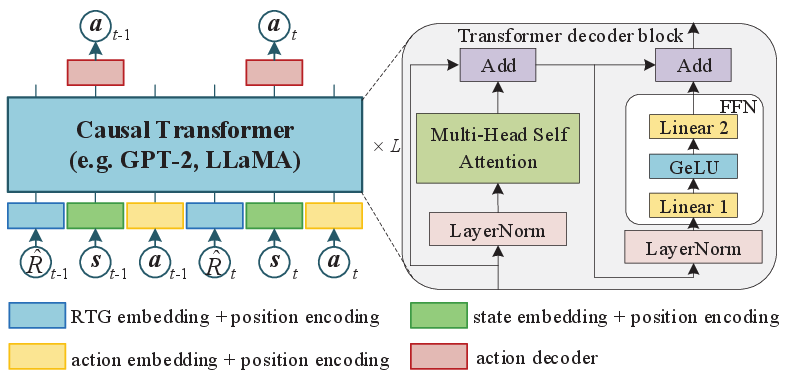}\\
\caption{The architecture of decision transformer.}
\label{fig:DT_architeture}
\end{figure}

In the offline setting, interactions with the environment are no longer available. Agents can only learn from a pre-collected trajectory dataset $\mathcal{D}=\{\bm{\tau}_{e}\}_{e=1}^E$ with $E$ trajectories generated by an unknown behavior policy $\pi_{\rm{B}}$. Each trajectory $\bm{\tau}_{e}$ ($e \in \{1,2,\cdots, E\}$) is an ordered sequence of return-to-go (RTG) values, states, and actions over the time horizon, defined as:
\begin{align}
\bm{\tau}_e = (\widehat{R}_1,\bm{s}_1,\bm{a}_1, \widehat{R}_2,\bm{s}_2,\bm{a}_2, \cdots, \widehat{R}_T,\bm{s}_T,\bm{a}_T ),
\end{align}
where $\widehat{R}_t$ is the RTG in time slot $t$, defined as the sum of future rewards from $t$ to the end of the episode, i.e.,
\begin{align}
\widehat{R}_t = \sum\nolimits_{t' = t}^T r_{t'}, \forall t \in \{1, 2, \cdots, T\}.
\end{align}

Given the offline trajectory dataset $\mathcal{D}$, we can train a DT model to learn the UAV control policy. The architecture of the DT model is illustrated in Fig. \ref{fig:DT_architeture}, which consists of three components: a data encoder module, a causal transformer module composed of a stack of transformer layers, and an action decoder module.
For each trajectory $\bm{\tau}_e$ and each time slot $t$, we feed the most recent $K$ timesteps of trajectory data into the DT model, i.e.,
\begin{align}\label{eq:sub_traj}
\bm{\tau}_{e,t} = (\widehat{R}_{t-K+1},\bm{s}_{t-K+1},\bm{a}_{t-K+1}, \cdots, \widehat{R}_{t},\bm{s}_{t},\bm{a}_{t}),
\end{align}
where $K$ is referred to as the context length. The input sequence $\bm{\tau}_{e,t}$ is first processed by the data encoder module, which comprises three linear layers and a learnable time embedding layer.
The time embedding layer encodes the time information of each step $i$ ($t-K+1 \leq i \leq t$) into a vector whose dimension matches the hidden dimension of the transformer module, i.e., $\bm{t}_i^{\rm{v}} \in \mathbb{R}^{1\times d_{\rm{trans}}}$, where $d_{\rm{trans}}$ is the hidden dimension of the transformer module.
Each of the three linear layers maps the RTG $\widehat{R}_i$, the state $\bm{s}_i$, and the action $\bm{a}_i$ at each step $i$ to vector representations, also matching the hidden dimension of the transformer.
The time embedding is then added into each representation to form the input tokens:
\begin{align}
\left\{ {\begin{array}{*{20}{c}}
\widehat{R}_{i}^{\rm{v}} = {\rm{Linear}}(\widehat{R}_{i}) + \bm{t}_i^{\rm{v}} \in \mathbb{R}^{1\times d_{\rm{trans}}}, \\
\bm{s}_{i}^{\rm{v}} = {\rm{Linear}}(\bm{s}_{i}) + \bm{t}_i^{\rm{v}} \in \mathbb{R}^{1\times d_{\rm{trans}}}, \\
\bm{a}_{i}^{\rm{v}} = {\rm{Linear}}(\bm{a}_{i}) + \bm{t}_i^{\rm{v}} \in \mathbb{R}^{1\times d_{\rm{trans}}}.
\end{array}} \right.
\end{align}

After encoding $\bm{\tau}_{e,t}$, we obtain a token sequence $\bm{\tau}_{e,t}^{\rm{v}} = (\widehat{R}_{t-K+1}^{\rm{v}},\bm{s}_{t-K+1}^{\rm{v}},\bm{a}_{t-K+1}^{\rm{v}}, \cdots, \widehat{R}_{t}^{\rm{v}},\bm{s}_{t}^{\rm{v}},\bm{a}_{t}^{\rm{v}} )$, which is then fed into the transformer module and output a corresponding sequence of predicted tokens, i.e.,
\begin{align}
&\bm{\tau}_{e,t}^{\rm{out}} = {\rm{Transformer}}(\bm{\tau}_{e,t}^{\rm{v}}) \notag\\
&= (\widehat{R}_{t-K+1}^{\rm{out}},\bm{s}_{t-K+1}^{\rm{out}},\bm{a}_{t-K+1}^{\rm{out}}, \cdots, \widehat{R}_{t}^{\rm{out}},\bm{s}_{t}^{\rm{out}},\bm{a}_{t}^{\rm{out}} ).
\end{align}
Finally, each output token $\bm{s}_i^{\rm{out}}$ ($t-K+1\le i \le t$) is passed to the action decoder, which consists of a linear layer, to predict the action for timestep $i$, i.e.,
\begin{align}
\hat{\bm{a}}_{i} = {\rm{Linear}}(\bm{s}_{i}^{\rm{out}}).
\end{align}
In summary, given a trajectory segment $\bm{\tau}_{e,t}$, the DT model outputs a sequence of predicted actions $(\hat{\bm{a}}_{t-K+1}, \ldots, \hat{\bm{a}}_{t})$, where $\hat{\bm{a}}_t$ is the predicted action at time step $t$.
The learning objective of DT model $\pi_{\bm{\theta}}$, parameterized by $\bm{\theta}$, is to predict next action $\hat{\bm{a}}_{t}$ from trajectory prefix and current state $\bm{s}_t$, using the true action $\bm{a}_t$ as the target, written as:
\begin{align}
L_{\rm{DT}}(\bm{\theta}) = \sum\limits_{i = t - K + 1}^t \left\|\bm{a}_i - \hat{\bm{a}}_i \right\|^2.
\end{align}
After training, the DT model can be deployed to interact with the environment and generate UAV control actions in an autoregressive manner.
For DT framework, we have the following theorem:
\begin{thm}\label{theorem:one}
Assume the above-defined MDP is $\varepsilon$-deterministic, i.e., $\Pr(r_t \ne \mathcal{R}(\bm{s}_t, \bm{a}_t)~{\rm{or}}~ \bm{s}_{t+1} \ne \mathcal{P}(\bm{s}_t, \bm{a}_t) \left| \bm{s}_t, \bm{a}_t) \right.) \le \varepsilon$  at all $\bm{s}_t$, $ \bm{a}_t$ for the reward function $\mathcal{R}$ and environment dynamics $\mathcal{P}$. Let $g(\bm{\tau})=\sum\nolimits_{t = 1}^T r_t$. Given a dataset $\mathcal{D}$ pre-collected by behavior policy $\pi_{\rm{B}}$, which contains a sufficient number of trajectories whose returns match the DT's conditioning values $\hat{R}_1$, i.e, $\Pr(g(\bm{\tau}) = \hat{R}_1\left| \bm{s}_1, \bm{\tau} \in \mathcal{D} \right.) \ge \zeta$ for all initial states $\bm{s}_1$. Then, the DT policy $\pi_{\bm{\theta}}$ learned from this dataset satisfy:
\begin{align}
\mathbb{E}_{\bm{\tau} \sim \pi_{\rm{B}}}\left[ g(\bm{\tau}) \right] - \mathbb{E}_{\bm{\tau} \sim \pi_{\bm{\theta}}}\left[ g(\bm{\tau}) \right] \le \varepsilon \Big(\frac{1}{\zeta} + 2 \Big) T^2
\end{align}
\end{thm}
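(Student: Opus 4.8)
The plan is to bound the return gap by a performance-difference (telescoping) argument, writing $\mathbb{E}_{\bm{\tau}\sim\pi_{\rm{B}}}[g(\bm{\tau})] - \mathbb{E}_{\bm{\tau}\sim\pi_{\bm{\theta}}}[g(\bm{\tau})]$ as a sum over the $T$ timesteps of per-step reward differences, each governed by the mismatch between the step-$t$ state--action occupancies induced by $\pi_{\rm{B}}$ and by $\pi_{\bm{\theta}}$. With per-step rewards bounded, the reward difference at step $t$ is at most a constant times the occupancy mismatch at step $t$, and since that mismatch accumulates the per-step errors of steps $1,\dots,t$ it grows linearly in $t$; summing $\sum_{t=1}^T t\,(\text{per-step error})$ then produces the quadratic factor $T^2$. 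As a consistency check, at $\varepsilon=0$ the MDP is exactly deterministic, the deployment-time bookkeeping $\hat{R}_{t+1}=\hat{R}_t-r_t$ stays aligned with the training labels, and since $\mathcal{D}$ contains trajectories attaining $\hat{R}_1$ the rollout reproduces such a trajectory by induction on $t$, so the right-hand side vanishes as it should.

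Next I would control the per-step occupancy mismatch, which has two ingredients. First, the deployment dynamics: $\varepsilon$-determinism states directly that, conditioned on being on the intended trajectory at step $t$, the realized pair $(r_t,\bm{s}_{t+1})$ departs from $(\mathcal{R}(\bm{s}_t,\bm{a}_t),\mathcal{P}(\bm{s}_t,\bm{a}_t))$ with probability at most $\varepsilon$, contributing the bare $\varepsilon$ term. Second, the action mismatch: the trained model's action distribution at $(\bm{s}_t,\hat{R}_t)$ equals the return-conditioned behavior policy $\pi_{\rm{B}}(\bm{a}_t\mid \bm{s}_t,\hat{R}_t)$, and I would bound how far this can stray from the action taken along the deterministic trajectory that realizes $\hat{R}_1$. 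A Bayes'-rule estimate gives the key factor: the return-matching event has probability at least $\zeta$ while any per-step stochastic departure has probability at most $\varepsilon$, so the conditional departure probability inside the training distribution is at most $\varepsilon/\zeta$. Adding the two per-step sources, $\varepsilon/\zeta + \varepsilon$ (up to the constant), and feeding this into the $T^2$ compounding of the first paragraph yields $\varepsilon\big(\tfrac{1}{\zeta}+2\big)T^2$.

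I expect the main obstacle to be rigorously isolating the $1/\zeta$ factor in the action-mismatch term. The delicate point is that conditioning on the return reweights the behavior action distribution toward actions whose high observed returns arose from lucky reward or transition realizations rather than from the action itself, so one cannot simply equate $\pi_{\rm{B}}(\cdot\mid\bm{s}_t,\hat{R}_t)$ with the deterministic-consistent choice. Formalizing this requires expanding the conditional policy via Bayes' rule, lower-bounding the denominator (the probability of the return-matching event) by $\zeta$, and upper-bounding the numerator's deviation from the noiseless choice by the per-step departure probability $\varepsilon$. Once this per-step bias bound is established, the remaining steps — the telescoping decomposition, the union bound over deviation times, and the summation of linearly-growing occupancy error into a quadratic-in-$T$ total — are routine.
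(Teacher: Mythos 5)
Your plan is correct in outline but takes a genuinely different route from the paper. The paper's own proof is a short reduction: it invokes, as a black box, the known return-conditioned supervised learning (RCSL) alignment bound of \cite{NEURIPS2022_0a2f65c9} --- namely $\mathbb{E}_{\bm{s}_1}[f(\bm{s}_1)] - \mathbb{E}_{\bm{\tau}\sim\pi_f^{\rm{RCSL}}}[g(\bm{\tau})] \le \varepsilon(\tfrac{1}{\zeta}+2)T^2$ under return coverage, $\varepsilon$-determinism, and RTG-consistency of the conditioning function --- and then simply chooses $f(\bm{s}_1)$ to be the expected return of $\pi_{\rm{B}}$ from $\bm{s}_1$, so that $\mathbb{E}_{\bm{s}_1}[f(\bm{s}_1)] = \mathbb{E}_{\bm{\tau}\sim\pi_{\rm{B}}}[g(\bm{\tau})]$ and the stated inequality drops out after checking that $f(\bm{s}_t)=\hat{R}_t$ satisfies the consistency condition $f(\bm{s}_t)=f(\bm{s}_{t+1})+r_t$. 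You instead propose to reconstruct the underlying lemma from scratch: telescoping the return gap over timesteps, splitting the per-step error into a dynamics-deviation term ($\le \varepsilon$ by $\varepsilon$-determinism) and an action-mismatch term controlled by the Bayes inflation $\Pr(A\mid E)\le \Pr(A)/\Pr(E)\le \varepsilon/\zeta$, and summing the linearly growing occupancy mismatch into the $T^2$ factor --- which is essentially how the cited result is actually proven, so your route is sound and more self-contained, while the paper's route buys brevity at the cost of opacity. Two caveats if you execute your plan: first, the constant-free $T^2$ silently requires bounded (normalized) per-step rewards, a hypothesis the paper inherits from the cited lemma without stating and that you also leave implicit; second, the one step the paper's proof \emph{does} carry out explicitly --- identifying the conditioning value with the behavior policy's expected return so that the RCSL gap becomes the theorem's left-hand side --- is missing from your sketch: as written, your argument bounds $\hat{R}_1 - \mathbb{E}_{\bm{\tau}\sim\pi_{\bm{\theta}}}[g(\bm{\tau})]$, and you still need the coverage hypothesis to tie $\hat{R}_1$ to $\mathbb{E}_{\bm{\tau}\sim\pi_{\rm{B}}}[g(\bm{\tau})]$ to land exactly on the stated inequality.
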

\begin{proof}
See Appendix \ref{app:one}.
\end{proof}

According to Theorem \ref{theorem:one}, training a DT model with the loss function $L_{\rm{DT}}$ drives the learned policy to gradually converge toward the behavior policy $\pi_{\rm{B}}$. However, this convergence inherently restricts the learned DT policy from surpassing the performance of the underlying behavior policy $\pi_{\rm{B}}$ that used to generate the trajectories in the offline dataset $\mathcal{D}$. Furthermore, training exclusively with the $L_{\rm{DT}}$ limits the DT's ability to stitch together high-reward actions across different trajectories, leaving the learned DT policy $\pi_{\bm{\theta}}$ largely biased toward actions observed during training.

\section{LLM-Empowered Critic-Regularized Decision Transformer for UAV Control}\label{sec:LLM_CRDT}
To address the limitations of DT approaches discussed above, this section presents an LLM-empowered critic-regularized DT framework, i.e., LLM-CRDT, to solve problem \eqref{prob:P_UAV} and facilitate efficient online UAV control. Specifically, we first propose a critic-regularized DT learning paradigm that endows the DT model with stitching capabilities. Then, we employ a pre-trained LLM as the transformer backbone of the DT model and employ LoRA for fine-tuning, thereby exploiting the LLM's few-shot generalization ability to further enhance UAV control performance.

\begin{figure}[ht]
\centering
\subfigure[]{\includegraphics[width=0.49\textwidth]{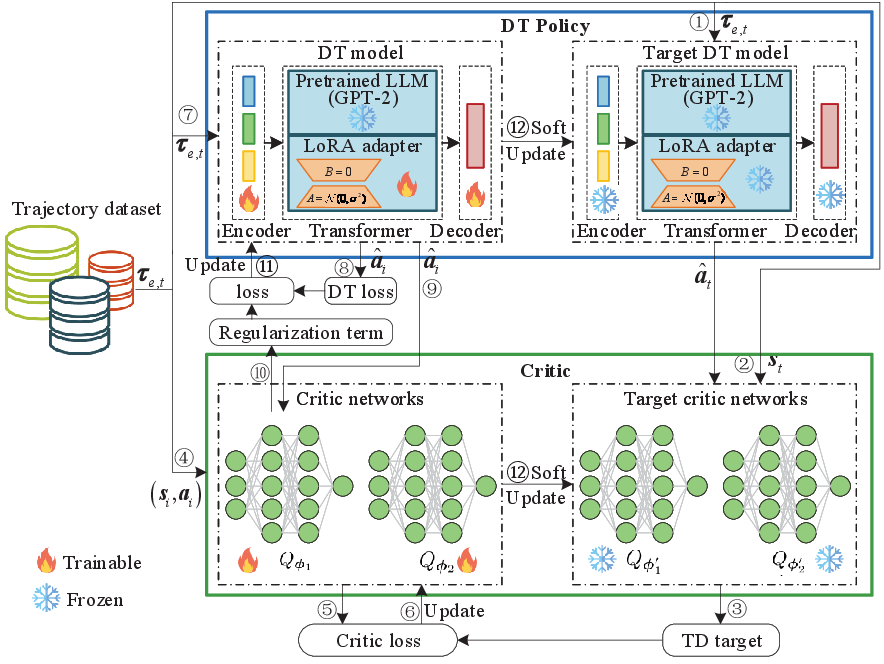}\label{fig:LLMCRDT_finetune}}\\
\subfigure[]{\includegraphics[width=0.48\textwidth]{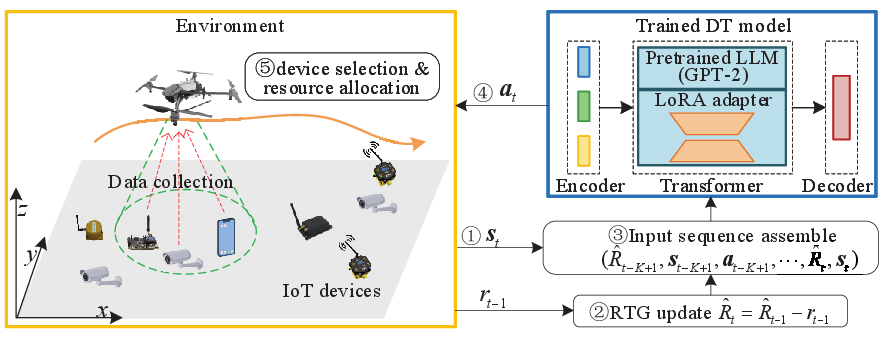}\label{fig:LLMCRDT_inference}}
\caption{The architecture of the proposed LLM-CRDT: (a) Fine-tuning on an offline trajectory dataset, (b) Inference to generate action and interact with environment.}
\label{fig:LLMCRDT}
\end{figure}

\subsection{Critic-Regularized Decision Transformer}
Inspired by the stitching capability inherent in actor-critic methods \cite{sutton1998reinforcement}, we propose a critic-regularized DT learning paradigm, which integrates the sequence modeling capacities of DT with the policy improvement guidance provided by critic networks, as shown in Fig. \ref{fig:LLMCRDT_finetune}. Unlike the standard DT approach in \cite{NEURIPS2021_7f489f64} that rely solely on supervised learning over return-conditioned trajectories, our approach introduces a state-action value-based regularization term into the training loss. This encourages the model to select actions that align with high value predictions from the critic, thereby improving policy performance beyond that of the behavior trajectories in the offline dataset.

Specifically, we draw inspiration from the actor-critic algorithms (such as SAC \cite{pmlr-v80-haarnoja18b} and TD3 \cite{pmlr-v80-fujimoto18a}) and employ two critic networks, $Q_{\bm{\phi}_1}$ and $Q_{\bm{\phi}_2}$, along with their corresponding target networks, $Q_{\bm{\phi}_1'}$ and $Q_{\bm{\phi}_2'}$, to learn accurate estimates of the state-action value function.
In addition, we employ a DT model $\pi_{\bm{\theta}}$, parameterized by $\bm{\theta}$, to predict actions.
Following the common practice in conventional RL approaches, we maintain a target DT model $\pi_{\bm{\theta}'}$ to enhance the stability of the learning process.
Considering that the DT policy conditions on past trajectory information, we estimate the state-action values via the $n$-step Bellman backup, which has been shown to outperform the 1-step approximation \cite{sutton1998reinforcement}. Given a trajectory segment $\bm{\tau}_{e,t}$ as defined in \eqref{eq:sub_traj}, we use the target DT policy $\pi_{\bm{\theta}'}$ to predict the action at time step $t$, i.e., $\hat{\bm{a}}_t$. We then compute the TD target for the $i$-th step ($t-K+1\le i \le t-1$) as follows:
\begin{align}\label{eq:TD_target}
\hat{Q}_i = \sum\limits_{j = i}^{t - 1} \gamma^{j - i} r_j + \gamma ^{t - i} \mathop {\min}\limits_{h \in \{1,2\}} Q_{\bm{\phi}_h'}(\bm{s}_t,\hat{\bm{a}}_t).
\end{align}
Next, we use the critic networks $Q_{\bm{\phi}_1}$ and $Q_{\bm{\phi}_2}$ to estimate the state-action values at each time step $i$, and their parameters are optimized by minimizing the following loss function:
\begin{align}\label{eq:critic_loss}
L(\bm{\phi}_h) = \sum\limits_{i = t - K + 1}^{t - 1} {\Big\| \hat{Q}_i - Q_{\bm{\phi}_h}(\bm{s}_i, \bm{a}_i) \Big\|^2}, h \in \{1, 2\}.
\end{align}

During the DT model training, we incorporate a state-action value loss term to regularize the DT towards sampling high-value actions. Specifically, we first use the DT model $\pi_{\bm{\theta}}$ to predict the actions in each time step $i \in \{t-K+1, \cdots, t\}$, i.e., ($\hat{\bm{a}}_{t-K+1}, \cdots, \hat{\bm{a}}_t$). Then, we use the critic network $Q_{\bm{\phi}_1}$ to evaluate the state-action values for these predicted actions, i.e, $\{Q_{\bm{\phi}_1}(\bm{s}_i, \hat{\bm{a}}_i): \forall i \in \{t-K+1, \cdots, t\}\}$. Hence, the DT model is optimized by minimizing the following loss function:
\begin{align}\label{eq:loss_CRDT}
L(\bm{\theta}) &= L_{\rm{DT}}(\bm{\theta}) + \lambda \cdot L_Q(\bm{\theta}) \notag\\
& = L_{\rm{DT}}(\bm{\theta}) - \lambda \sum \limits_{i=t-K+1}^t Q_{\bm{\phi}_1}(\bm{s}_i, \hat{\bm{a}}_i),
\end{align}
where $L_Q(\bm{\theta})= -\sum \nolimits_{i=t-K+1}^t Q_{\bm{\phi}_1}(\bm{s}_i, \hat{\bm{a}}_i)$ is the regularization term to evaluate the DT policy and promotes high-reward actions, $\lambda > 0$ is a hyperparameter used to balance the trajectory modeling loss and the regularization loss.
By training with the loss function \eqref{eq:loss_CRDT}, the DT model acquires stitching capabilities and is able to outperform the underlying behavior policy used to collect the dataset.

In addition, to stabilize learning process, we adopt the soft-update strategy \cite{sutton1998reinforcement} to update the target critic networks ($\phi_1'$ and $\phi_2'$) and the target policy network $\bm{\theta}'$. With mixing coefficient $\rho \in (0, 1)$, the target networks are updated as
\begin{align}\label{eq:target_network_update}
\left\{ {\begin{array}{*{20}{c}}
\bm{\theta}' = \rho \bm{\theta} + (1-\rho) \bm{\theta}',\\
\bm{\phi}_h' = \rho \bm{\phi}_h + (1-\rho) \bm{\phi}_h', h \in \{1, 2\}.
\end{array}} \right.
\end{align}

\subsection{LLM-Empowered Critic-Regularized Decision Transformer}

\begin{algorithm}[t] \small
\caption{Fine-tuning of LLM-CRDT}
\label{alg:finetuning}
\begin{spacing}{0.85}
\begin{algorithmic}[1]
\State \textbf{Input}: Offline trajectory dataset $\mathcal{D}$, context length $K$, soft update coefficient $\rho$, LoRA rank $r$, scaling factor $\alpha$, batch size $B$.
\State Create LoRA adapters for the query, key, value projection layers and attention output layer in each transformer block of the LLM.
\State Freeze the LLM backbone, enable the parameters of data encoder, LoRA adapters, action decoder, and critics to be trainable.
\Repeat
\State \parbox[t]{\dimexpr\linewidth-\algorithmicindent}{Randomly sample a batch of $B$ trajectories $\mathcal{B}=\{\bm{\tau}_e\}_{e=1}^B$ from $\mathcal{D}$.}
\State \parbox[t]{\dimexpr\linewidth-\algorithmicindent}{For each trajectory $\bm{\tau}_e \in \mathcal{B}$, randomly select a time slot $t_e$ and extract the corresponding $K$-step segment $\bm{\tau}_{e, t_e}$, forming a batch of trajectory segments, i.e., $\mathcal{B}_s = \{\bm{\tau}_{e,t_e}\}_{e=1}^B$.}
\State \parbox[t]{\dimexpr\linewidth-\algorithmicindent}{Pass $\mathcal{B}_s$ through the target DT $\pi_{\bm{\theta}'}$ to predict actions for time slot $t_e$, i.e., $\{\hat{\bm{a}}_{e,t_e}: e \in \{1, \cdots, B\}\}$.}
\State \parbox[t]{\dimexpr\linewidth-\algorithmicindent}{Compute the TD target for every time slot of each trajectory segment $\bm{\tau}_{e, t_e}$ based on \eqref{eq:TD_target}.}
\State \parbox[t]{\dimexpr\linewidth-\algorithmicindent}{Compute the critic losses for $\bm{\phi}_1$ and $\bm{\phi}_2$ based on \eqref{eq:critic_loss}, and perform gradient descent to optimize them.}
\State \parbox[t]{\dimexpr\linewidth-\algorithmicindent}{Input the $\mathcal{B}_s$ into the DT model $\pi_{\bm{\theta}}$ to compute the predicted actions, i.e., $\hat{\bm{a}} = \{\hat{\bm{a}}_{h,r}: 1\le h \le B, t_h - K + 1 \le r \le t_h\}$.}
\State \parbox[t]{\dimexpr\linewidth-\algorithmicindent}{Compute the DT loss based on \eqref{eq:loss_CRDT}, and optimize it via gradient descent.}
\State \parbox[t]{\dimexpr\linewidth-\algorithmicindent}{update the target critic networks and target actor network based on \eqref{eq:target_network_update}.}
\Until{The model is converged.}
\end{algorithmic}
\end{spacing}
\end{algorithm}

Up to now, the proposed critic-regularized DT gains stitching capability. However, transformers are known to be data-hungry and usually require massive amounts of training data to achieve satisfactory performance. As a result, training the DT model from scratch can be both challenging and computationally expensive. To this end, this subsection proposes LLM-CRDT to leverage the few-shot generalization capabilities of pre-trained LLMs to enhance the DT model for UAV control.

Similar to the standard DT framework in \cite{NEURIPS2021_7f489f64}, we adopt GPT-2 \cite{radford2019language} as the transformer module in our proposed LLM-CRDT. However, unlike prior DT works (e.g., \cite{NEURIPS2021_7f489f64, 10848143}) that randomly initialize model parameters and train DT models from scratch, we initialize the transformer with pre-trained GPT-2 parameters. This enables the DT model to exploit the general knowledge encoded in LLMs, thereby enhancing action prediction. Note that the transformer module in DT can be flexibly replaced with other casual LLMs, such as LLaMA \cite{touvron2023llama}, with the choice of LLM depending on the tradeoff between computational cost and performance requirements.

After initialization, we fine-tune LLM-CRDT, which consists of the data encoder, the pre-trained LLM, the action decoder, and the critic networks. Notably, the pre-trained LLM accounts for the majority of the model parameters, whereas the parameter sizes of the other modules are comparatively negligible. Given the large scale of LLMs, full-parameter fine-tuning is computationally expensive and risks overwriting the valuable knowledge embedded in the pre-trained LLM, potentially diminishing its few-shot generalization capabilities. To this end, we adopt a parameter-efficient fine-tuning approach, i.e., LoRA \cite{hu2022lora}, to efficiently adapt the pre-trained LLM for UAV control, thereby substantially reducing the number of trainable parameters. Specifically, we apply LoRA to the query, key, and value projection layers, as well as the self-attention output layer in each transformer block of the LLM. For a given layer with parameter matrix $\mathbf{W} \in \mathbb{R}^{d_{\rm{LLM}} \times d_{\rm{LLM}}}$, where $d_{\rm{LLM}}$ denotes the LLM's hidden dimension. LoRA injects trainable low-rank matrices $\mathbf{A} \in \mathbb{R}^{d_{\rm{LLM}} \times r}$ and $\mathbf{B} \in \mathbb{R}^{r \times d_{\rm{LLM}}}$ into the layer. The original weight matrix $\mathbf{W}$ is then modified as
\begin{align}
\mathbf{W} \leftarrow \mathbf{W} + \frac{\alpha}{r} \mathbf{B}\mathbf{A},
\end{align}
where $r$ is the LoRA adapter rank and $\alpha$ controls its influence. During fine-tuning, the weight matrix $\mathbf{W}$ remains frozen, while only the LoRA adapter matrices $\mathbf{A}$ and $\mathbf{B}$ are updated. Typically, $\mathbf{A}$ is initialized with Gaussian distribution values, and $\mathbf{B}$ is initialized to zeros. Since $r \ll d_{\rm{LLM}}$, the number of trainable parameters introduced by $\mathbf{A}$ and $\mathbf{B}$, i.e., $2rd_{\rm{LLM}}$, is significantly smaller than that of $\mathbf{W}$ (i.e., $d_{\rm{LLM}}^2$). Consequently, LoRA enables efficient and cost-effective fine-tuning. Furthermore, by keeping the original LLM weights fixed, LoRA mitigates catastrophic forgetting, thereby enabling us to retain and exploit the universal reasoning and generalization capabilities of the pre-trained LLM for UAV control.

With the assistance of LoRA, we fine-tune the LLM-CRDT to learn the UAV control policy from an offline trajectory dataset $\mathcal{D}=\{\bm{\tau}_e\}_{e=1}^{E}$ with $E$ trajectories. In particular, the LLM weights are frozen, while only the parameters of the data encoder, LoRA adapters, action decoder, and critic networks are updated, as shown in Fig. \ref{fig:LLMCRDT_finetune}. In each fine-tuning epoch, we randomly sample a batch of $B$ trajectories $\mathcal{B}=\{\bm{\tau}_e\}_{e=1}^B$. For each trajectory $\bm{\tau}_e \in \mathcal{B}$, we randomly select a time slot $t_e$ and extract the corresponding $K$-step segment $\bm{\tau}_{e, t_e}$ ending at $t_e$, thereby forming a batch of trajectory segments $\mathcal{B}_s = \{\bm{\tau}_{e,t_e}\}_{e=1}^B$. We then pass $\mathcal{B}_s$ through the target DT model to predict the action at time step $t_e$ for each trajectory segment $\bm{\tau}_{e,t_e}$, yielding $\hat{\bm{a}}_{e, t_e}$. Following that, we compute TD targets for each time step based on \eqref{eq:TD_target}, evaluate the critic losses via \eqref{eq:critic_loss}, and optimize the critic networks via gradient descent. Subsequently, we feed $\mathcal{B}_s$ into the DT model to obtain the predicted actions for each time step of each trajectory segment, i.e., $\hat{\bm{a}}_{e,i}$ for $e = 1, \dots, B$ and $i = t_e - K + 1, \dots, t_e$.
Finally, we compute the loss of the DT model based on \eqref{eq:loss_CRDT} and perform gradient descent to update the parameters of DT model. For clarity, the detailed steps for fine-tuning the proposed LLM-CRDT model are summarized in Algorithm \ref{alg:finetuning}.

\subsection{Inference of LLM empowered Decision Transformer}
After fine-tuning, we deploy the proposed LLM-CRDT model for online inference to control the UAV and complete the data collection mission. In the deployment phase, only the trained DT model $\pi_{\bm{\theta}}$ is required, while the target DT model and critic networks are utilized solely during fine-tuning and are not involved in inference. During deployment, $\pi_{\bm{\theta}}$ interacts with the environment in an autoregressive loop using a context window of length $K$, consistent with the context length adopted in fine-tuning phase, as illustrated in Fig. \ref{fig:LLMCRDT_inference}.

\begin{algorithm}[t] \small
\caption{Inference of LLM-CRDT for UAV control}
\label{alg:inference}
\begin{spacing}{0.85}
\begin{algorithmic}[1]
\State \textbf{Input}: Trained DT model $\pi_{\bm{\theta}}$, target RTG $\widehat{R}_1$, context length $K$.
\State Initialize the environment and obtain the initial state $\bm{s}_1$.
\State Initialize the trajectory history as $\bm{\tau}=\{(\bm{s}_1, \hat{R}_1)\}$ and time slot $t=1$.
\For{$t = 1:T$}
    \State \parbox[t]{\dimexpr\linewidth-\algorithmicindent}{Construct the input sequence $\bm{\tau}_t$ by concatenating the $K$ most-recent completed triples $(\widehat{R}_i, \bm{s}_i, \bm{a}_i)$ for $i= t-K, \cdots, t-1$ in $\bm{\tau}$, followed by the current pair $(\widehat{R}_t, \bm{s}_t)$.}
    \State Left-pad the input sequence $\bm{\tau}_t$ if $t<K$.
    \State Input $\bm{\tau}_t$ into the DT model and obtain predicted action $\hat{\bm{a}}_t$.
    \State \parbox[t]{\dimexpr\linewidth-\algorithmicindent}{Execute action $\bm{a}_t$ in the environment, apply Algorithm \ref{alg:RB_allocation} for device selection and resource allocation, and then obtain the next state $\bm{s}_{t+1}$ and reward $r_t$.}
    \State Update the RTG as $\hat{R}_{t+1} = \hat{R}_t - r_t$
    \State Append the triplets $(\widehat{R}_{t}, \bm{s}_{t}, \hat{\bm{a}}_t)$ to trajectory $\bm{\tau}$.
    \If{termination condition is met}
        \State Break the circulation.
    \EndIf
\EndFor
\end{algorithmic}
\end{spacing}
\end{algorithm}

At the beginning of an episode, we initialize the environment to obtain the initial state $\bm{s}_1$ and specify a desired RTG $\widehat{R}_1$, which is usually set as the maximum possible return in the training dataset \cite{NEURIPS2021_7f489f64}. Then, we perform the DT model $\pi_{\bm{\theta}}$ to generate UAV control actions in each slot to complete the data collection task.
Specifically, in each time slot $t$, given the current state $\bm{s}_t$ and RTG $\widehat{R}_t$, we assemble the model's input sequence as the most recent $K$ triplets $(\widehat{R}_i, \bm{s}_i, \bm{a}_i)$ up to time slot $t-1$, followed by the pair $(\widehat{R}_t, \bm{s}_t)$.
If fewer than $K$ time slots are available, i.e., $t<K$, the sequence is left-padded and masked.
The input sequence is then processed in the same manner as during fine-tuning: it is first encoded by the data encoder, passed through the LLM, and finally decoded by the action decoder to yield the predicted action $\hat{\bm{a}}_t$. The action is executed in the environment, producing the next state $\bm{s}_{t+1}$ and reward $r_t$. The RTG is subsequently updated as
\begin{align}
\hat{R}_{t+1}=\hat{R}_t-r_t.
\end{align}
Furthermore, the new triplet $(\hat{R}_t,\bm{s}_t,\hat{\bm{a}}_t)$ is appended to the trajectory history and the context window is advanced to retain at most $K$ steps.
This process is repeated until the episode terminates.
For clarity, we summarize the detailed steps of the LLM-CRDT inference for UAV control in Algorithm \ref{alg:inference}.

\section{Simulation Results}\label{sec:simulations}
This section evaluates the effectiveness of the proposed LLM-CRDT and resource allocation approach. In the simulations, we consider the devices are randomly distributed within a rectangular area with side length $X_{\max}=1$ km and $Y_{\max}=1$ km. The initial data volume of each device $n$, i.e., $D_n$, is randomly selected from $[0.5 D_{\max}, D_{\max}]$. Unless specified otherwise, the default settings of the UAV-enabled IoT network are summarized in Table \ref{tab:simu_settings}, which follow widely adopted configurations for UAV-enabled IoT networks in prior works, e.g., \cite{10044099, 9826413, 9562293, 10706837}. In the proposed LLM-CRDT model, GPT-2 serves as the LLM backbone, while both critic networks adopt a three-layer MLP architecture with Mish activations and 512 hidden units per layer. For dataset preparation, a behavior policy is first trained using the SAC algorithm, after which an offline dataset of 1000 trajectories is collected by rolling out this policy in the environment. The fine-tuning of LLM-CRDT is performed over 1000 epochs with a context length of $K=20$, a LoRA rank of $r=16$, a LoRA scaling factor of $\alpha=32$, a batch size of $B=128$, a soft update coefficient of $\rho=0.005$, and a loss coefficient of $\lambda=1.0$. Both the DT model and the critic networks are optimized using the AdamW optimizer with a learning rate of $10^{-5}$.

\begin{table}[ht]\small
\caption{System Parameters}
\label{tab:simu_settings}
\centering
\begin{tabular}{p{1.5cm}|p{1.3cm}|p{1.5cm}|p{1.3cm}}
\hline
Parameter & Value & Parameter & Value \\
\hline
$N$ & 100  & $H$ & 100 m \\
$V_{\max}$ & 25 m/s & $\iota$ & 2.2 \\
$\kappa$ & 0.2 & $a$ & 15 \\
$b$ & 0.5 & $g_0$ & -50 dB \\
$p_n$ & 0.1 W & $D_{\max}$ & 20 MB \\
$M$& 10 & $W$ & 1 MHz \\
$N_0$ & -174 dBm & $\omega_{\max}$ & 1.0 \\
$T$ & 500 & $\delta$ & 5 s \\
$P_1$ & 79.86 W & $P_2$ & 88.63 W \\
$U_{\rm{tip}}$ & 120 m/s & $v_0$ & 4.03 m /s \\
$d_0$ & 0.6 & $\rho$ & 1.225 \\
$g$ & 0.05 & $A$ & 0.503 $\rm{m}^2$ \\
\hline
\end{tabular}
\vspace{-0.5cm}
\end{table}

\subsection{Effectiveness of LLM-CRDT}\label{subsec:simu_A}
In this subsection, we verify the effectiveness of the proposed LLM-CRDT by comparing it with the following online and offline RL approaches for UAV trajectory planning.
Note that, for all benchmark approaches, device selection and resource allocation decisions are obtained using our proposed Algorithm \ref{alg:RB_allocation}, while the UAV control policies are determined by each respective method.
\begin{itemize}
  \item TD3 \cite{pmlr-v80-fujimoto18a}: TD3 is an online RL method that combats overestimation by using two critic networks and stabilizes training via delayed policy updates.
  \item SAC \cite{pmlr-v80-haarnoja18b}: SAC is a online RL algorithm that combines off-policy learning with entropy maximization, encouraging exploration and improving robustness.
  \item TD3+BC \cite{NEURIPS2021_a8166da0}: TD3+BC is an offline reinforcement learning approach that modifies the policy update of the TD3 by incorporating a behavior cloning term to regularize the policy, enabling effective learning in offline settings.
  \item SAC-N \cite{NEURIPS2021_3d3d286a}: SAC-N is an offline RL approach that modifies the SAC algorithm by using an ensemble of multiple critic networks to enhance performance in offline settings. In this work, we employ four critic networks for the SAC-N approach.
  \item Behavior Cloning (BC) \cite{sutton1998reinforcement}: BC is an imitation learning method which learns a policy via supervised learning to mimic actions in the training dataset.
  \item DT \cite{NEURIPS2021_7f489f64}: DT is an offline RL approach that reframes policy learning as a sequence modeling problem, leveraging a transformer architecture to predict actions from past states, actions, and desired returns.
  \item LLM-DT: This approach uses pretrained LLMs to initialize the transformer module in DT and applies LoRA for fine-tuning. The only difference from the proposed LLM-CRDT is that the critic networks are not involved for regularization during the fine-tuning process.
\end{itemize}

\begin{figure}[ht]
\centering
\subfigure[]{\label{fig:offline_loss}
\includegraphics[width=0.21\textwidth]{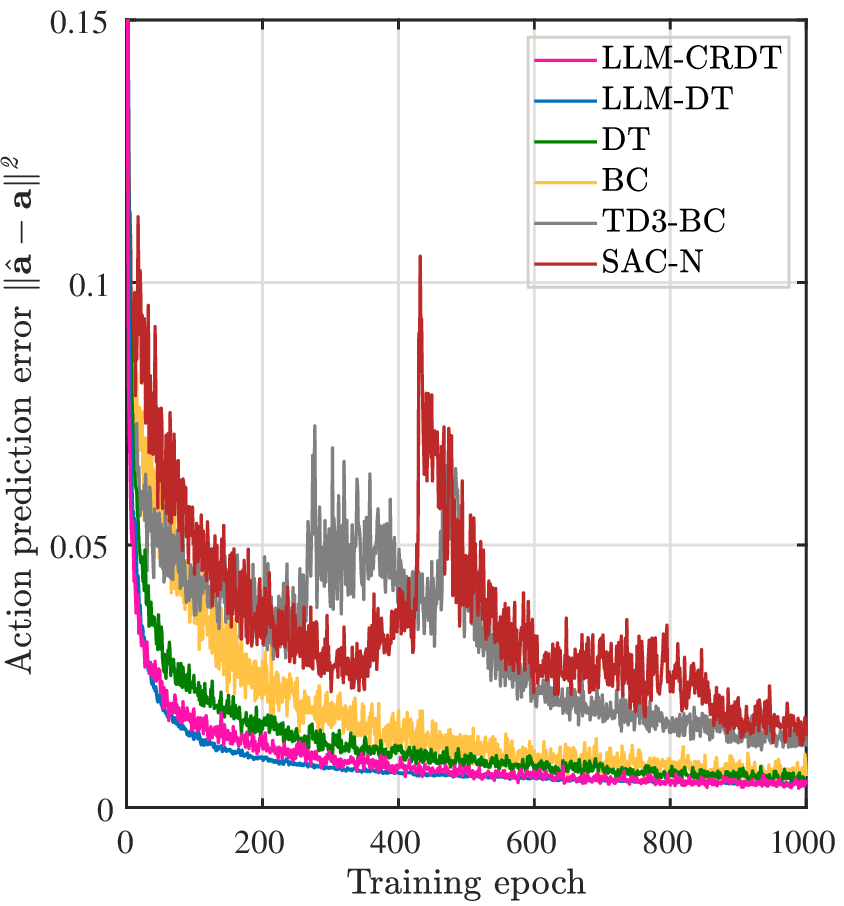}}
\subfigure[]{\label{fig:offline_reward}
\includegraphics[width=0.21\textwidth]{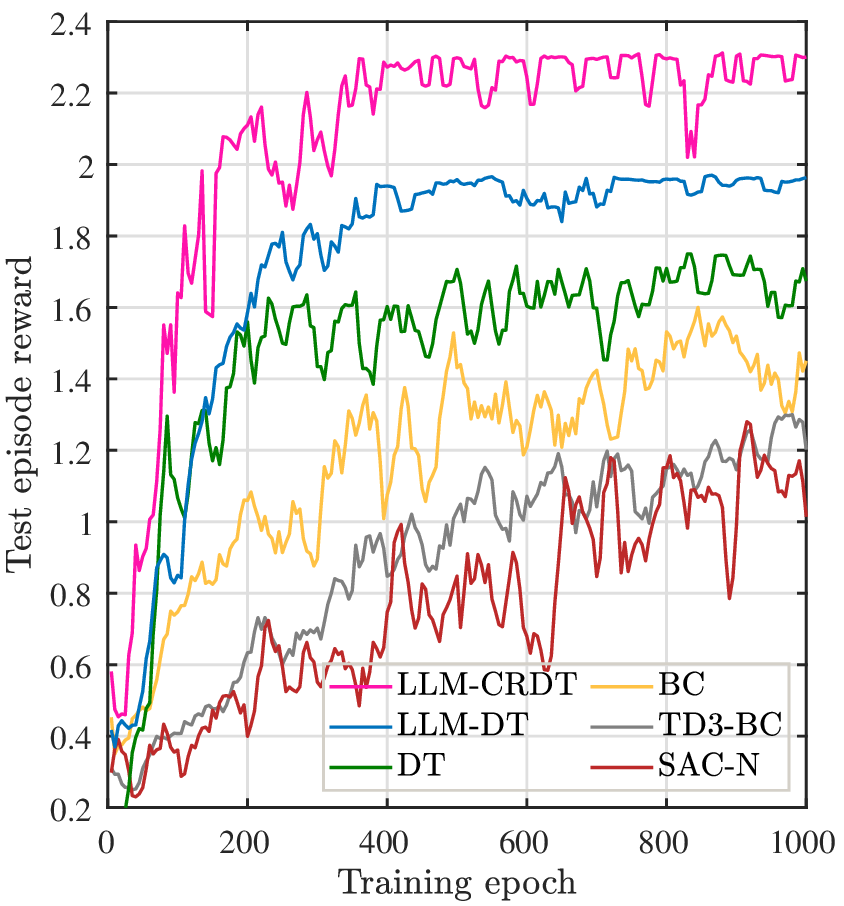}}
\caption{Comparison of learning performance for offline RL approaches.}
\label{fig:offlineAlgs}
\end{figure}

\begin{figure}[ht]
\centering
\subfigure[]{\label{fig:EE_Numdevice}
\includegraphics[width=0.46\linewidth]{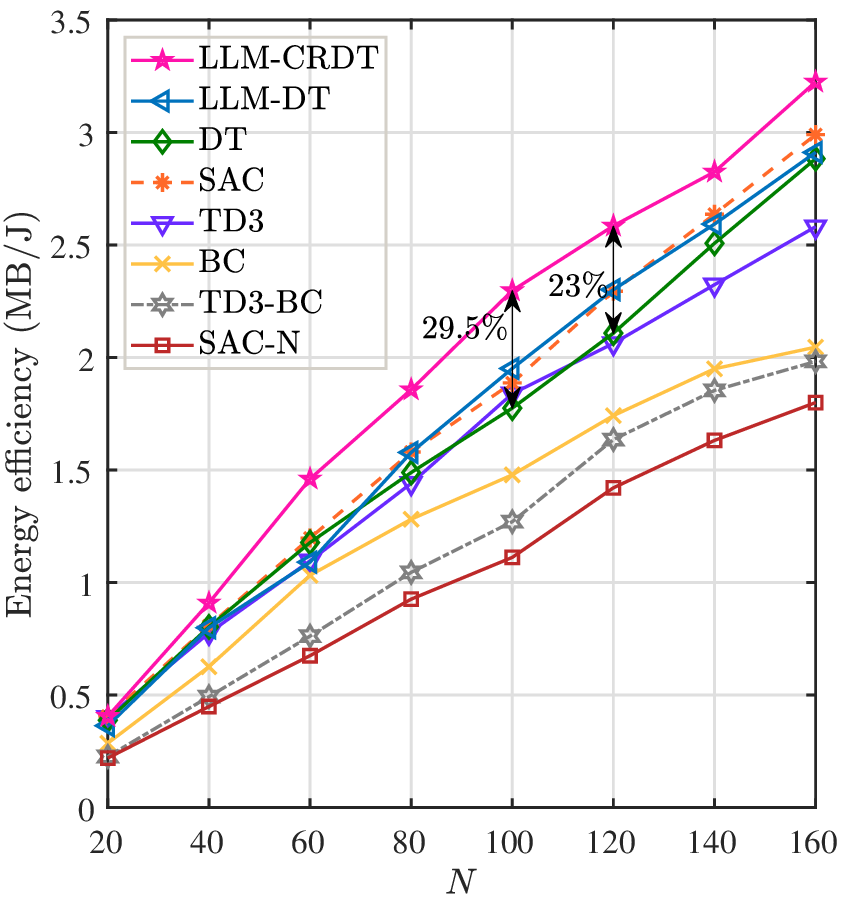}}
\subfigure[]{\label{fig:EE_Dmax}
\includegraphics[width=0.46\linewidth]{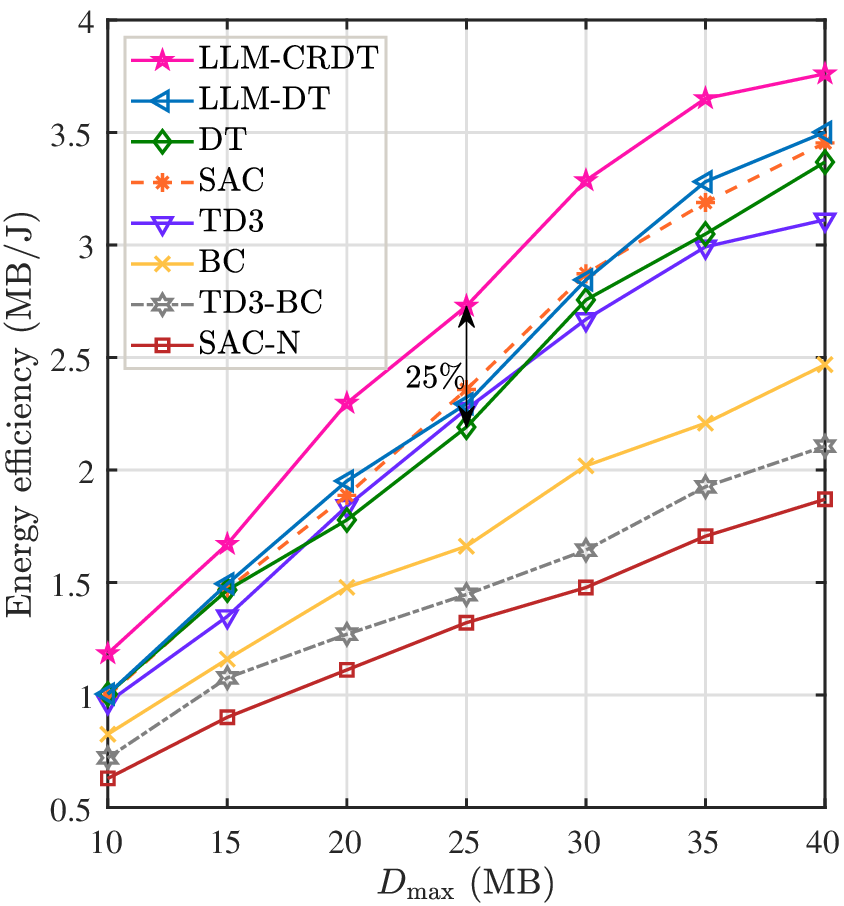}}\\
\subfigure[]{\label{fig:EE_Angle}
\includegraphics[width=0.46\linewidth]{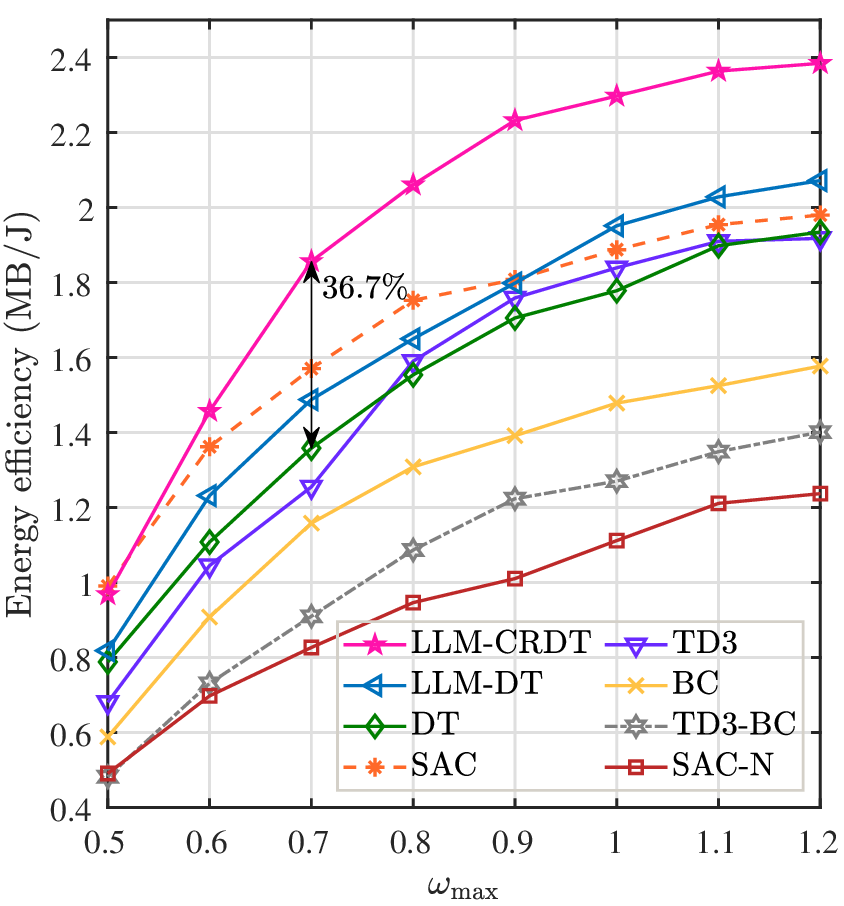}}
\subfigure[]{\label{fig:EE_height}
\includegraphics[width=0.46\linewidth]{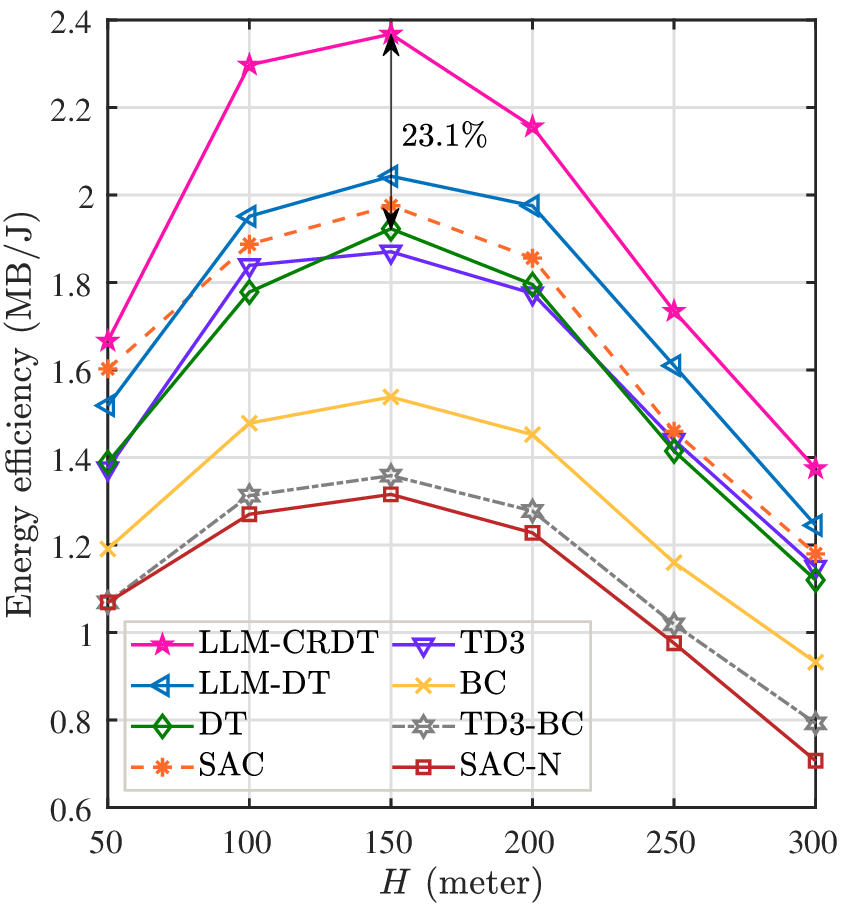}}
\caption{Comparison between the proposed LLM-CRDT and benchmarks on different system settings: (a) The number of IoT devices $N$; (b) The maximum data amount of devices $D_{\max}$; (c) The maximum azimuth angle of the UAV $\omega_{\max}$; (d) The flight height of the UAV $H$. }
\label{fig:sys_settings_EE}
\end{figure}

\begin{figure*}[ht]
\centering
\subfigure[]{\label{fig:Data_behavior}
\includegraphics[width=0.28\linewidth]{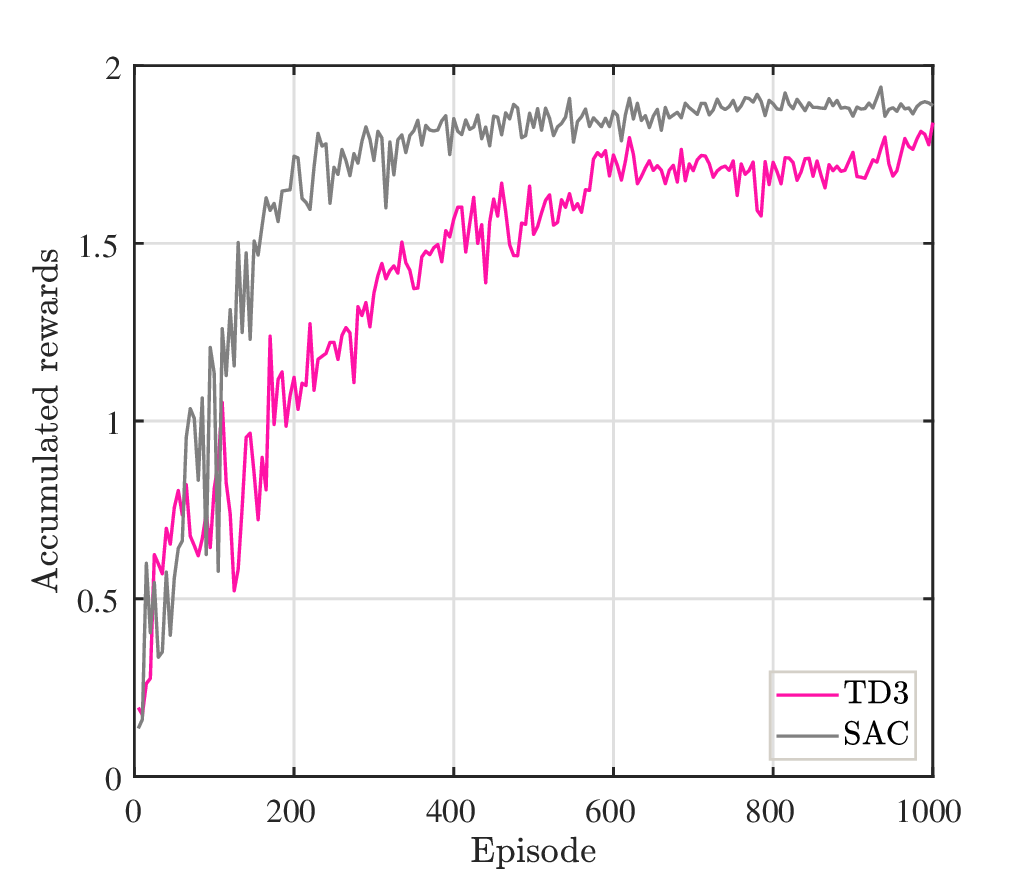}}
\subfigure[]{\label{fig:Data_quality}
\includegraphics[width=0.28\linewidth]{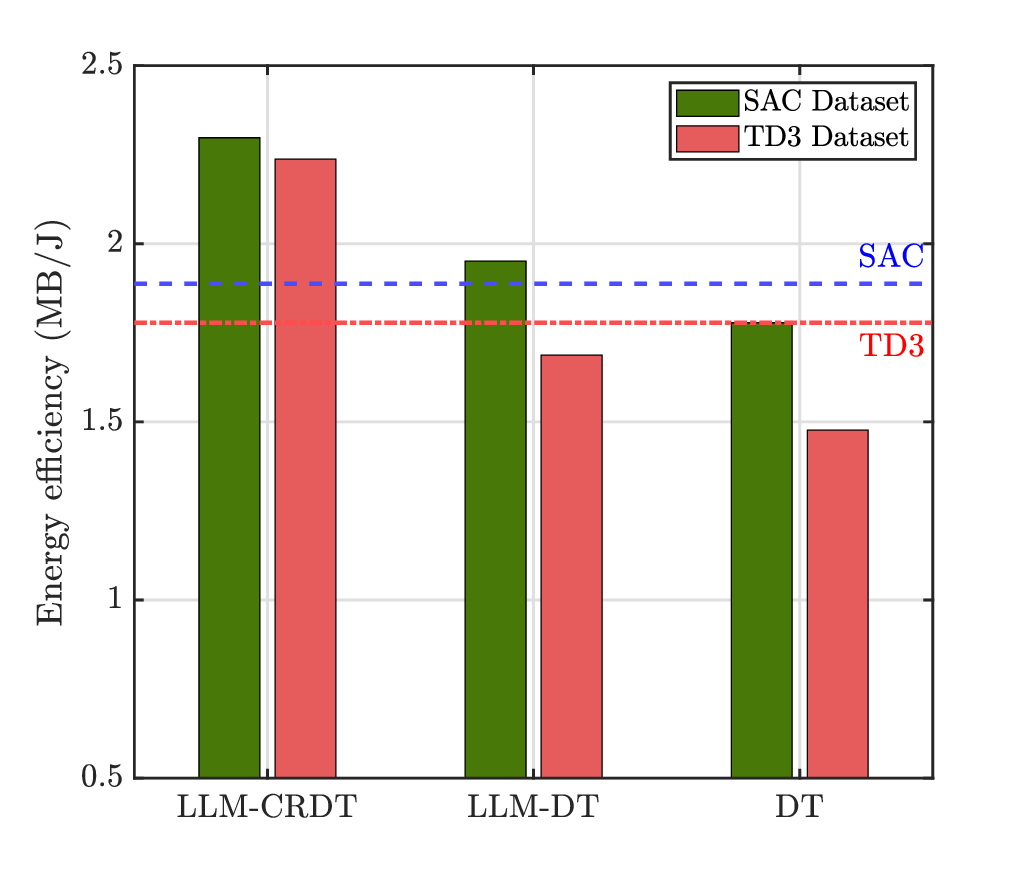}}
\subfigure[]{\label{fig:Data_amount}
\includegraphics[width=0.28\linewidth]{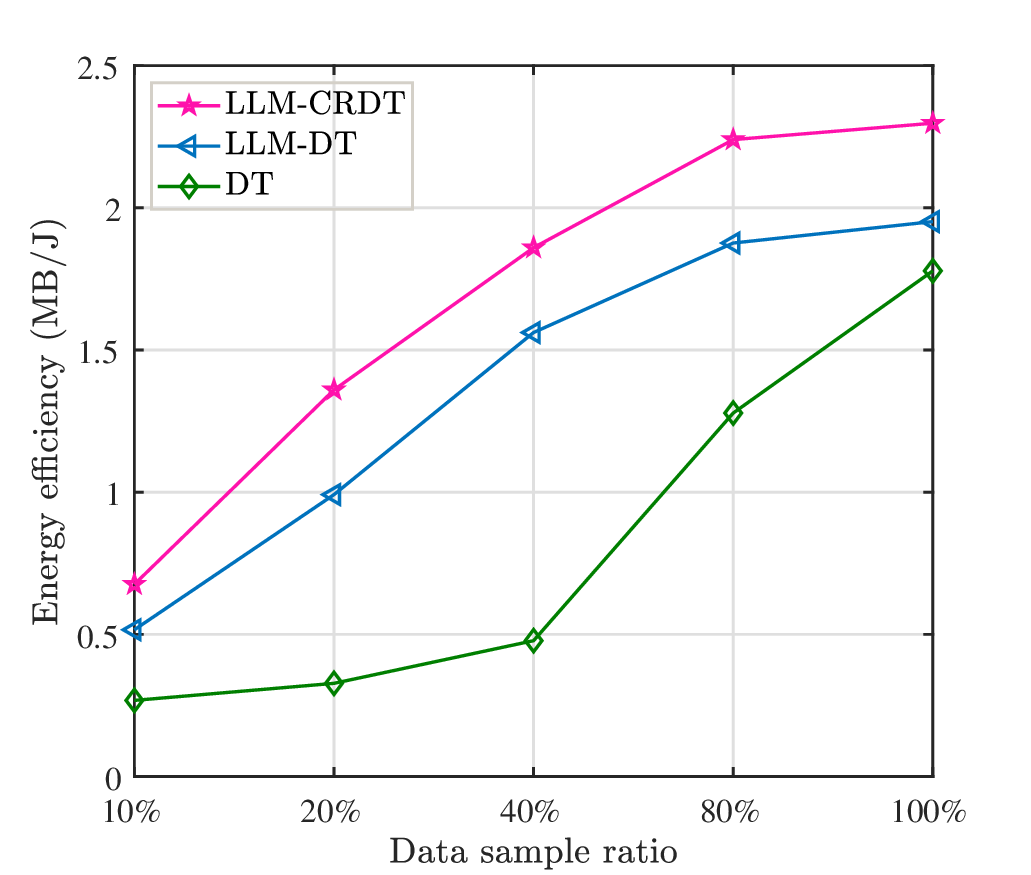}}
\caption{Impacts of training dataset quality and quantity on the proposed LLM-CRDT and DT approaches: (a) Learning performance of the behavior policy used to generate the offline dataset; (b) Effect of dataset quality; (c) Effects of training data quantity.}
\label{fig:Data_QA}
\end{figure*}

Fig. \ref{fig:offlineAlgs} compares the learning performance of our proposed LLM-CRDT to benchmark offline RL approaches. Fig. \ref{fig:offline_loss} illustrates the action prediction errors of all offline approaches, which gradually decrease and eventually converge as training progresses. This indicates that all methods are able to fit the actions in the dataset and ultimately align with the underlying behavior policy. Fig. \ref{fig:offline_reward} presents the episode reward, where the proposed LLM-CRDT outperforms all benchmark approaches. This demonstrates the effectiveness of the proposed critic-regularized learning paradigm and the fine-tuning of pre-trained LLMs in exploiting their universal reasoning capabilities. In addition, although BC, TD3-BC, and SAC-N are able to fit the behavior policy of the training dataset, their evaluation rewards are unstable during training and remain significantly lower than those of DT and our proposed LLM-CRDT. This performance gap arises because these conventional offline RL methods are highly susceptible to out-of-distribution states and actions not present in the training dataset. In essence, these methods perform simple imitation of the behavior policy without effective generalization beyond the observed trajectories.

Fig. \ref{fig:sys_settings_EE} compares the proposed LLM-CRDT with benchmark online and offline RL approaches. Specifically, Fig. \ref{fig:EE_Numdevice} shows energy efficiency of all approaches increasing with device number, since higher device density enables the UAV to collect more data efficiently. Compared with DT, our LLM-CRDT achieves a 29.5\% improvement in energy efficiency.
Fig. \ref{fig:EE_Dmax} and Fig. \ref{fig:EE_Angle} compare performance under varying maximum data amounts and maximum azimuth angles, showing that LLM-CRDT achieves up to 25\% and 36.7\% higher energy efficiency than the benchmarks, respectively. Fig. \ref{fig:EE_height} shows that the performance of all methods grows with height at first, then deteriorates as flying height increases further. This is because a higher altitude expands the UAV's sensing range, whereas excessive height degrades transmission performance between devices and the UAV. Moreover, the proposed approach surpasses DT by 25\%. It is also observed that the performance of DT nearly matches that of SAC, which is used to collect the training dataset. This observation verifies the theoretical results in Theorem \ref{theorem:one}, namely that DT is constrained by the quality of its training dataset and cannot significantly outperform the underlying behavior policy. In contrast, LLM-DT achieves noticeable improvements over DT, attributed to the universal knowledge embedded in the pre-trained LLM that enhances action prediction. Finally, LLM-CRDT consistently outperforms LLM-DT, demonstrating the effectiveness of critic regularization in endowing the DT model with stitching capabilities for superior decision-making.

\subsection{Effects of Offline Dataset Quality}
In this subsection, we assess how offline data quality and quantity influence LLM-CRDT's performance. For comparison, in addition to the SAC-generated dataset used in the previous simulations, we employ a trained TD3 agent to generate another dataset by rolling it out in the environment, each containing 1000 trajectories. For clarity, we hereafter refer to these two datasets as the SAC dataset and the TD3 dataset, respectively. To illustrate their quality, Fig. \ref{fig:Data_behavior} presents the learning performance of the underlying agents, i.e., SAC and TD3. As observed, SAC outperforms TD3, indicating that the SAC dataset is of higher quality than the TD3 dataset.

Fig. \ref{fig:Data_quality} reports how training-data quality affects the proposed LLM-CRDT and the baseline methods. For both the SAC dataset and TD3 dataset, it is observed that the performance of these approaches consistently follows the order: LLM-CRDT $>$ LLM-DT $>$ DT. An interesting phenomenon is that LLM-CRDT trained on the TD3 dataset performs almost identically to LLM-CRDT trained on the SAC dataset. This demonstrates that the proposed critic-regularized learning paradigm equips LLM-CRDT with a stitching ability, allowing it to combine actions from suboptimal trajectories into better decisions. Moreover, LLM-CRDT trained on the TD3 dataset still outperforms both LLM-DT and DT trained on the SAC dataset, further confirming its capability to learn effective UAV control policies from suboptimal dataset. Therefore, the proposed LLM-CRDT is able to address the key limitation of DT approaches by mitigating the reliance of expert-level dataset.

Fig. \ref{fig:Data_amount} evaluates the effect of training data quantity on the performance of the proposed LLM-CRDT and the benchmark approaches. Specifically, all methods are trained using 10\%, 20\%, 40\%, 80\%, and 100\% of the SAC dataset. The results show that LLM-CRDT consistently outperforms both LLM-DT and DT across all settings. Notably, in the low-data regime (i.e., below 40\%), the DT model shows little performance improvement, whereas the proposed LLM-CRDT and LLM-DT still achieve satisfactory results. This demonstrates that leveraging a pre-trained LLM effectively reduces the reliance on large-scale training data by exploiting the universal knowledge embedded in the model.

\subsection{Effectiveness of device scheduling}
This subsection evaluates the proposed device selection and resource allocation approach, i.e., Algorithm \ref{alg:RB_allocation}, by comparing it with the following baselines:
1) Random selection: In each time slot, the UAV draws a random set of covered devices and allocates the corresponding RBs, ensuring feasibility with \eqref{cons:P_6}-\eqref{cons:P_10}
2) Remaining data-aware selection: The UAV selects a subset of devices with the largest remaining data volumes and allocates RBs to them in each slot, subject to constraints \eqref{cons:P_6}-\eqref{cons:P_10}.
3) Channel gain-aware selection: The UAV selects a subset of devices with the highest channel gains and allocates RBs to them while satisfying constraints \eqref{cons:P_6}-\eqref{cons:P_10}.
In this evaluation, UAV control is managed by the proposed LLM-CRDT. Notably, random selection corresponds to choosing a random perfect matching of the graph $\bm{G}$ in Section \ref{subsec:resource_allocation}. The remaining data-aware and channel gain-aware schemes also construct similar bipartite graphs but compute the maximum-weight perfect matching based on their respective policies. The results are shown in Fig. \ref{fig:device_selection}, where the proposed algorithm consistently outperforms all baselines. In particular, it achieves up to a 55\% improvement in energy efficiency compared with the benchmark methods.

\begin{figure}[ht]
\centering
\includegraphics[width=0.32\textwidth]{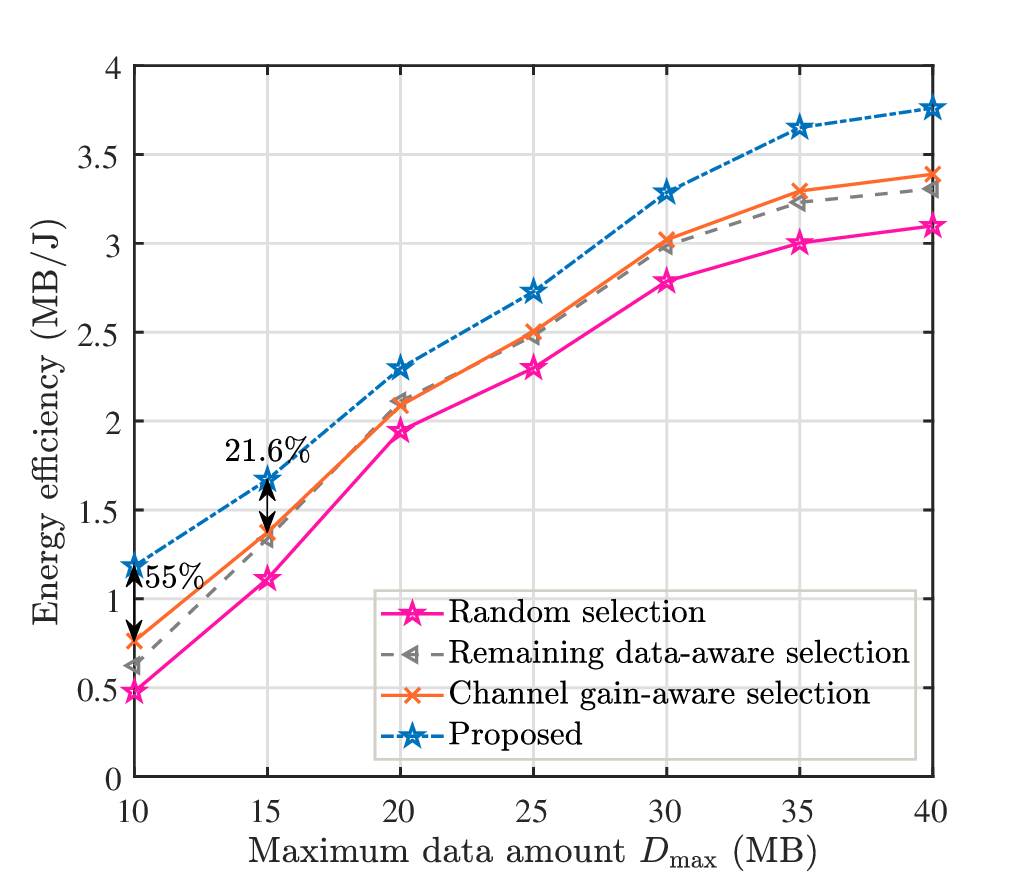}
\caption{Comparison of device selection and resource allocation approaches.}
\label{fig:device_selection}
\end{figure}

\subsection{Effects of Different Configurations for LLM-CRDT}
In this subsection, we investigate the impact of different configurations in LLM-CRDT, including the rank of the LoRA adapters, the context length, and the choice of the pre-trained LLM. Table \ref{tab:impacts_lora_rank} reports the impact of the LoRA rank on the energy efficiency performance of the proposed LLM-CRDT. Note that $r=0$ corresponds to the case where all LLM parameters are frozen (no LoRA adapters are used), while $r=d_{\rm{LLM}}$ represents full-parameter fine-tuning. Interestingly, a small LoRA rank reliably outperforms both freezing and full fine-tuning. This phenomenon can be attributed to the generalizable knowledge embedded in LLMs through large-scale pre-training, which can be effectively transferred to UAV control tasks. Since pre-trained LLMs are originally optimized on textual corpora rather than UAV trajectory optimization, simply freezing them yields limited task-specific adaptability, while full-parameter tuning risks overwriting and degrading this transferable knowledge. In contrast, selective adaptation with LoRA safeguards the pretrained model's broad generalization, enabling precise task-oriented tuning. As the LoRA adapter rank $r$ increases, both the total number of model parameters and the proportion of trainable parameters grow, inevitably incurring higher computational costs. While larger $r$ values generally improve performance, the gains diminish beyond a certain point. Specifically, when $r > 16$, the improvements become marginal. Therefore, the LoRA rank should be chosen to balance computational budget with expected gains.

\begin{table}[h]\small
\caption{Effect of LoRA Adapter Rank}
\centering
\label{tab:impacts_lora_rank}
\begin{tabular}{p{1.7cm}<{\centering}|p{2.2cm}<{\centering}|p{1.6cm}<{\centering}|p{1.5cm}<{\centering}}
\hline
LoRA $r$ & Energy efficiency(MB/J) & Model Parameters & \%Trainable Parameters\\
\hline
0   & 1.1704 &  126,148,816 &  1.35\%\\
8   & 1.9157  & 126,591,184 & 1.69\% \\
16   & 2.2975 & 127,033,552  & 2.04\% \\
24  & 2.3945 & 127,475,920 & 2.38\% \\
32  & 2.4632 & 127,918,288 & 2.72\% \\
$d_{\rm{LLM}}=768$  & 1.77825 &  338,215,812 & 100\% \\
\hline
\end{tabular}
\end{table}

In Fig. \ref{fig:LLMs_contextLen}, we explore the effect of context length and the choice of pre-trained LLM on the performance of the proposed LLM-CRDT. Specifically, we test four LLM variants from the GPT-2 series \cite{radford2019language}, whose configurations are summarized in Table \ref{tab:LLM_infos}. It is observed that energy efficiency consistently improves as the size of the pre-trained LLM increases. This is because that larger LLMs with more parameters usually possess stronger reasoning and generalization capabilities, thereby enhancing the performance of LLM-CRDT. However, the performance gap between GPT-2 XL and GPT-2 Large is smaller than that between GPT-2 Medium and GPT-2, showing that the performance gains diminish as the model size grows. Moreover, larger LLMs also incur substantially higher computational costs. Thus, the choice of LLM requires carefully balancing performance against computational efficiency. Furthermore, increasing the context length yields noticeable performance gains at first, as longer histories provide richer temporal information, allowing the model to more accurately capture trajectory correlations. Nevertheless, after the context length surpasses a certain threshold, i.e., 20, the performance gain saturates. This is because additional historical information beyond this point offers limited marginal benefit, as the system approaches its capacity to improve reward.

\begin{table}[ht]\small
\caption{Comparison of different LLMs}
\centering
\label{tab:LLM_infos}
\begin{tabular}{p{2.05cm}<{\centering}|p{1cm}<{\centering}|p{0.8cm}<{\centering}|p{1.6cm}<{\centering}|p{0.8cm}<{\centering}}
\hline
Model & \#Params & \#layer & Hidden size & \#Heads\\
\hline
GPT-2           & 124M  & 12 & 768  & 12\\
GPT-2 Medium    & 355M  & 24 & 1024 & 16\\
GPT-2 Large     & 774M  & 36 & 1280 & 20\\
GPT-2 XL        & 1.5B  & 48 & 1600 & 25\\
\hline
\end{tabular}
\end{table}

\begin{figure}[ht]
\centering
\includegraphics[width=0.32\textwidth]{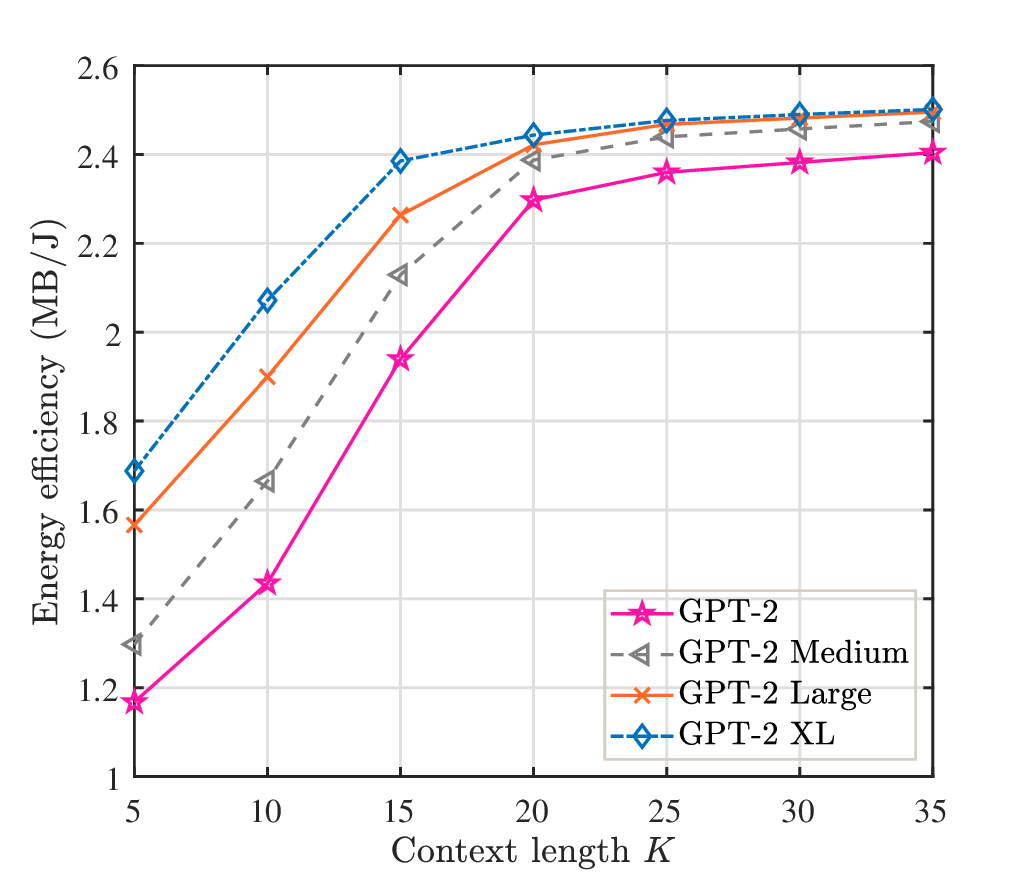}
\caption{Comparison of different device selection and resource allocation approaches.}
\label{fig:LLMs_contextLen}
\end{figure}

\section{Conclusion}\label{sec:conclusion}
This work studied the energy efficiency maximization problem for UAV-enabled data collection in IoT networks. The resource allocation subproblem was transformed into an equivalent linear programming and solved optimally, while the UAV trajectory control problem was reformulated as an offline RL problem. To overcome the reliance of expert-quality data and large-scale training in existing DT methods, we proposed an LLM-empowered critic-regularized DT framework, i.e., LLM-CRDT, to learn effective UAV control policies from a offline dataset. LLM-CRDT integrates critic-based value guidance into DT training, enabling effective policy learning from suboptimal data. Moreover, a pre-trained LLM was adopted as the transformer backbone of LLM-CRDT, and LoRA-based fine-tuning was employed to enable rapid adaptation to UAV control with small datasets and low computational overhead. Simulations demonstrated that LLM-CRDT significantly outperforms benchmark online and offline RL approaches.

\appendix
\subsection{Proof of Theorem \rm{\ref{theorem:one}}}\label{app:one}
To prove Theorem \ref{theorem:one}, we first introduce a bound that characterizes the performance gap between return-conditioned supervised learning methods (such as DT) and their underlying behavior policy.
Consider the MDP formulated in Section \ref{subsec:prob_trans}, an unknown behaviour policy $\pi_{\rm{B}}$, and a conditioning function $f$.
Assuming the following conditions:
\begin{enumerate}
  \item Let $g(\bm{\tau}) = \sum\nolimits_{t = 1}^T r_t$ denotes the total return of a trajectory $\bm{\tau}$. For all initial state $\bm{s}_1 \in \mathcal{S}$, the conditioning function satisfy $\Pr\left(g({\bm{\tau}}) = f(\bm{s}_1)\left| \bm{s}_1 \right.\right) \ge \zeta$.
  \item The MDP is $\varepsilon$-deterministic, i.e., $\Pr\left(r_t \ne \mathcal{R}(\bm{s}_t, \bm{a}_t)~{\rm{or}}~ \bm{s}_{t+1} \ne \mathcal{P}(\bm{s}_t, \bm{a}_t)\left| \bm{s}_t, \bm{a}_t \right. \right) \le \varepsilon$  at all $\bm{s}_t$ and $\bm{a}_t$ for the reward function $\mathcal{R}$ and environment dynamics $\mathcal{P}$.
  \item The conditioning function is consistent of the RTG, i.e., $f(\bm{s}_t) = f(\bm{s}_{t+1}) + r_t$ for all $\bm{s}_t$.
\end{enumerate}
Then, the RCSL methods trained on data generated by the behavior policy $\pi_{\rm{B}}$ satisfies the following performance bound:
\begin{align}\label{eq:RCSL_bound}
\mathbb{E}_{\bm{s}_1}\left[ f(\bm{s}_1) \right] - \mathbb{E}_{\bm{\tau}  \sim \pi_f^{\rm{RCSL}}}\left[ g(\bm{\tau}) \right] \le \varepsilon \Big( \frac{1}{\zeta} + 2 \Big) T^2,
\end{align}
where $\pi_f^{\rm{RCSL}}$ represents the policy learned by RCSL methods when conditioned on function $f$.
The proof of this bound is provided in \cite{NEURIPS2022_0a2f65c9}.
Given an offline datset $\mathcal{D}$ collected by the behaviour policy $\pi_{\rm{B}}$, we choose the conditioning function as $f(\bm{s}_1) = \sum\nolimits_{r_{1:T} \sim \pi_{\rm{B}}(\bm{s}_1)} r$ and substitute it into \eqref{eq:RCSL_bound}, we have
\begin{align}
&\mathbb{E}_{\bm{s}_1}\left[ f(\bm{s}_1) \right] - \mathbb{E}_{\bm{\tau}  \sim \pi_f^{\rm{RCSL}}}\left[ g(\bm{\tau}) \right] \notag\\
&= \mathbb{E}_{\bm{s}_1}\Big[ \sum\nolimits_{r_{1:T} \sim \pi_{\rm{B}} (\bm{s}_1)} r  \Big] - \mathbb{E}_{\bm{\tau}  \sim \pi_f^{\rm{RCSL}}}\left[ g(\bm{\tau}) \right] \notag\\
&= \mathbb{E}_{\bm{\tau} \sim \pi_{\rm{B}}}\Big[ \sum\nolimits_{t = 1}^T r_t \Big] - \mathbb{E}_{\bm{\tau}  \sim \pi_f^{\rm{RCSL}}}\left[ g(\bm{\tau}) \right] \notag\\
&= \mathbb{E}_{\bm{\tau} \sim \pi_{\rm{B}}}\left[ g(\bm{\tau}) \right] - \mathbb{E}_{\bm{\tau}  \sim \pi_f^{\rm{RCSL}}}\left[ g(\bm{\tau}) \right].
\end{align}
Then recall the reward-to-go $\hat{R}_t = \sum\nolimits_{i = t}^T r_i$, it is obvious that the conditioning function $f(\bm{s}_t)=\hat{R}_t$ satisfy the requirement about the consistence of conditioning function, i.e., condition 3). Thus, we have
\begin{align}
\mathbb{E}_{\bm{s}_1}\left[ f(\bm{s}_1) \right] \!-\! \mathbb{E}_{\bm{\tau}  \sim \pi_f^{\rm{RCSL}}}\left[ g(\bm{\tau}) \right]
&\!=\! \mathbb{E}_{\bm{\tau} \sim \pi_{\rm{B}}}\left[ g(\bm{\tau}) \right] \!-\! \mathbb{E}_{\bm{\tau} \sim \pi_{\bm{\theta}}}\left[ g(\bm{\tau}) \right] \notag\\[-0.1cm]
&\le \varepsilon \Big( \frac{1}{\zeta} + 2 \Big) T^2,
\end{align}
where $\pi_{\bm{\theta}}$ is the learned policy of DT.
Thus, the proof is completed.

\scriptsize
\bibliographystyle{IEEEtran}
\bibliography{IEEEabrv,cited}
\end{document}